\newcolumntype{x}[1]{>{\centering\arraybackslash}p{#1}}
\newtheorem{thm}{Theorem}
\newtheorem{lemma}[thm]{Lemma}
\newtheorem{cor}[thm]{Corollary}
\newtheorem{Def}[thm]{Definition}
\def\thmhead@plain#1#2#3{%
  \thmname{#1}\thmnumber{\@ifnotempty{#1}{ }\@upn{#2}}%
  \thmnote{ {\the\thm@notefont#3}}}
\let\thmhead\thmhead@plain
\newtheorem{rem}[thm]{Remark}
\let\oldrem\rem
\renewcommand{\rem}{\oldrem\normalfont}
\newtheorem{ex}[thm]{Example}
\let\oldex\ex
\RenewDocumentCommand{\ex}{o}{%
  \IfNoValueTF{#1}
    {\oldex}
    {\oldex[#1]}%
  \normalfont
}
\newcommand{\bb}{\begin{equation}\begin{aligned}\hspace{0pt}}
\newcommand{\bbb}{\begin{equation*}\begin{aligned}}
\newcommand{\ee}{\end{aligned}\end{equation}}
\newcommand{\eee}{\end{aligned}\end{equation*}}
\newcommand*{\coloneqq}{\mathrel{\vcenter{\baselineskip0.5ex \lineskiplimit0pt \hbox{\scriptsize.}\hbox{\scriptsize.}}} =}
\newcommand\floor[1]{\left\lfloor#1\right\rfloor}
\newcommand\ceil[1]{\left\lceil#1\right\rceil}
\let\texteq\relax
\let\textleq\relax
\newcommand{\texteq}[1]{\stackrel{\mathclap{\scriptsize \mbox{#1}}}{=}}
\newcommand{\textleq}[1]{\stackrel{\mathclap{\scriptsize \mbox{#1}}}{\leq}}
\newcommand{\sumno}{\sum\nolimits}
\newcommand{\tcb}[1]{{\color{blue} #1}}
\newcommand{\R}{\mathds{R}}
\let\Tr\relax
\DeclareMathOperator{\Tr}{Tr}
\DeclareMathOperator{\co}{conv}
\DeclareMathOperator{\inter}{int}
\DeclareMathAlphabet{\pazocal}{OMS}{zplm}{m}{n}
\DeclareMathOperator{\ext}{ext}
\DeclareMathOperator{\pr}{Pr}
\DeclareMathOperator{\vol}{vol}
\newcommand{\lsmatrix}{\left(\begin{smallmatrix}}
\newcommand{\rsmatrix}{\end{smallmatrix}\right)}
\newcommand\xxrightarrow[2][]{\mathrel{%
  \setbox2=\hbox{\stackon{\scriptstyle#1}{\scriptstyle#2}}%
  \stackunder[5pt]{%
    \xrightarrow{\makebox[\dimexpr\wd2\relax]{$\scriptstyle#2$}}%
  }{%
   \scriptstyle#1\,%
  }%
}}
\newcommand{\tends}[2]{\xxrightarrow[\! #2 \!]{\mathrm{#1}}}
\newcommand*\rel@kern[1]{\kern#1\dimexpr\macc@kerna}
\def\reallywidetilde#1{\mathop{\vbox{\m@th\ialign{##\crcr\noalign{\kern3\p@}%
      \sortoftildefill\crcr\noalign{\kern3\p@\nointerlineskip}%
      $\hfil\displaystyle{#1}\hfil$\crcr}}}\limits}
\def\sortoftildefill{$\m@th \setbox\z@\hbox{$\braceld$}%
  \braceld\leaders\vrule \@height\ht\z@ \@depth\z@\hfill\braceru$}
\newcommand*\wtildea[1]{%
  \begingroup
    \settowidth{\dimen0}{$#1$}%
    \rlap{\resizebox{\dimen0}{\totalheight}{$\widetilde{\phantom{x\vphantom{#1}}}$}}%
  \endgroup
  #1}
\tikzset{meter/.append style={draw, inner sep=10, rectangle, font=\vphantom{A}, minimum width=30, line width=.8, path picture={\draw[black] ([shift={(.1,.3)}]path picture bounding box.south west) to[bend left=50] ([shift={(-.1,.3)}]path picture bounding box.south east);\draw[black,-latex] ([shift={(0,.1)}]path picture bounding box.south) -- ([shift={(.3,-.1)}]path picture bounding box.north);}}}
\tikzset{roundnode/.append style={circle, draw=black, fill=gray!20, thick, minimum size=10mm}}
\tikzset{squarenode/.style={rectangle, draw=black, fill=none, thick, minimum size=10mm}}
\definecolor{Blues5seq1}{RGB}{239,243,255}
\definecolor{Blues5seq2}{RGB}{189,215,231}
\definecolor{Blues5seq3}{RGB}{107,174,214}
\definecolor{Blues5seq4}{RGB}{49,130,189}
\definecolor{Blues5seq5}{RGB}{8,81,156}
\definecolor{Greens5seq1}{RGB}{237,248,233}
\definecolor{Greens5seq2}{RGB}{186,228,179}
\definecolor{Greens5seq3}{RGB}{116,196,118}
\definecolor{Greens5seq4}{RGB}{49,163,84}
\definecolor{Greens5seq5}{RGB}{0,109,44}
\definecolor{Reds5seq1}{RGB}{254,229,217}
\definecolor{Reds5seq2}{RGB}{252,174,145}
\definecolor{Reds5seq3}{RGB}{251,106,74}
\definecolor{Reds5seq4}{RGB}{222,45,38}
\definecolor{Reds5seq5}{RGB}{165,15,21}
\newcommand{\Depth}{3}
\newcommand{\Height}{3}
\newcommand{\Width}{3}
\DeclareRobustCommand{\qed}{%
  \ifmmode % if math mode, assume display: omit penalty etc.
  \else \leavevmode\unskip\penalty9999 \hbox{}\nobreak\hfill
  \fi
  \quad\hbox{\qedsymbol}}
\newcommand{\openbox}{\leavevmode
  \hbox to.77778em{%
  \hfil\vrule
  \vbox to.675em{\hrule width.6em\vfil\hrule}%
  \vrule\hfil}}
\newcommand{\qedsymbol}{\openbox}
\newenvironment{proof}[1][\proofname]{\par
  \normalfont
  \topsep6\p@\@plus6\p@ \trivlist
  \item[\hskip\labelsep\itshape
    #1.]\ignorespaces
}{%
  \qed\endtrivlist
}
\newcommand{\proofname}{Proof}
\let\tcb\relax
\begin{document}

\title{A Post-Quantum Associative Memory}

% \author[$\dag{}$]{Ludovico Lami}
% % \email{ludovico.lami@gmail.com}
% \author[$\dag{}\dag{}$]{Daniel Goldwater}
% \author[$\dag{}\dag{}$]{Gerardo Adesso}

% \affil[$\dag$]{Institut f\"{u}r Theoretische Physik und IQST, Universit\"{a}t Ulm, Albert-Einstein-Allee 11, D-89069 Ulm, Germany}
% \affil[$\dag\dag$]{Mathematical Sciences,
% University of Nottingham, University Park, Nottingham NG7 2RD, United Kingdom}
% \renewcommand\Authands{ and }

\author{Ludovico Lami}
\address{Institut f\"{u}r Theoretische Physik und IQST, Universit\"{a}t Ulm, Albert-Einstein-Allee 11, D-89069 Ulm, Germany}
\address{QuSoft, Korteweg-de Vries Institute for Mathematics, and Institute for Theoretical Physics, University of Amsterdam, Science Park, 1098 XG Amsterdam, the Netherlands}
%\ead{ludovico.lami@gmail.com}

\author{Daniel Goldwater}
\address{School of Mathematical Sciences,
University of Nottingham, University Park, Nottingham NG7 2RD, United Kingdom}
%\ead{dangoldwater@gmail.com}

\author{Gerardo Adesso}
\address{School of Mathematical Sciences,
University of Nottingham, University Park, Nottingham NG7 2RD, United Kingdom}
\ead{gerardo.adesso@nottingham.ac.uk}

\begin{abstract}
Associative memories are devices storing information that can be fully retrieved given partial disclosure of it. We examine a toy model of associative memory and the ultimate limitations it is subjected to within the framework of general probabilistic theories (GPTs), which represent the most general class of physical theories satisfying some basic operational axioms. We ask ourselves how large the dimension of a GPT should be so that it can accommodate $2^m$ states with the property that any $N$ of them are perfectly distinguishable. Call $d(N,m)$ the minimal such dimension. Invoking an old result by Danzer and Gr\"unbaum, we prove that 
%when $N=2$ the optimal answer to this question is $m+1$, 
$d(2,m)=m+1$, to be compared with $O(2^m)$ when the GPT is required to be either classical or quantum. This yields an example of a task where GPTs outperform both classical and quantum theory exponentially. %The same problem for $N\geq 3$ is left open.
More generally, we resolve the case of fixed $N$ and asymptotically large $m$, proving that $d(N,m) \leq m^{1+o_N(1)}$ (as $m\to\infty$) for every $N\geq 2$, which yields again an exponential improvement over classical and quantum theories. Finally, we develop a numerical approach to the general problem of finding the largest $N$-wise mutually distinguishable set for a given GPT, which can be seen as an instance of the maximum clique problem on $N$-regular hypergraphs.
\end{abstract}
\maketitle

\section{Introduction}

A memory is a physical system which can be used to store some information which can later be retrieved. Memories can be complete, if all the information stored can be recovered at once; or incomplete, if only a part of it can be accessed. They can be perfect, if the retrieved pieces of information reproduce the original ones with probability one, or imperfect otherwise. The physical interest of designing an incomplete or imperfect memory is that in return for the loss of performance there might be an effective compression of the system size.

For example, the \emph{quantum random access encodings} of Ambainis et al.~\cite{Ambainis1998} (see also~\cite{VerSteeg2009}) allow for storing $2n$ classical bits into $n$ qubits, in such a way that \emph{any} given bit (but not all of them simultaneously) can be retrieved with probability $p\approx 0.79$. These memories are therefore incomplete and imperfect, but they allow for an effective compression of the physical system employed, as compared to the na\"ive encoding of $2n$ bits into $2n$ qubits, which is both complete and perfect.

The celebrated Hopfield network~\cite{Hopfield1982} is another example of an imperfect memory designed to model biological systems. An array of neurons is connected based on the desired information to be stored. The dynamics of the array result in attractors that precisely correspond to the stored states.
 The net effect is that upon being prepared in a certain initial configuration, the system often evolves towards the stored state that most resembles it. This mechanism amounts to an imperfect retrieval of the encoded information. The Hopfield network is in a certain sense an incomplete memory, because the recovery of a certain stored state can {take place} only if the initial configuration is sufficiently close to it. In other words; \emph{some} information about the stored state has to be disclosed if we want to retrieve the rest.

In this paper we want to study and characterise the ultimate physical limitations to the performance of incomplete memories. In order to achieve this, following recent developments~\cite{implausible, PVV, ultimate, XOR, cones-2, cones-3} we will utilise the formalism of \emph{general probabilistic theories} (GPTs). Within this mathematical framework, it is possible to model a vast family of physical theories, including classical probability theory, quantum mechanics; and more exotic theories such as generalised bits~\cite{Barrett-original}, spherical models~\cite{ultimate}, and Popescu--Rohrlich (PR) boxes~\cite{PR-boxes,Popescu1997}, to name a few~\cite{lamiatesi, Mueller2021, Plavala2021}.
% The nature of the problem we are studying will also motivate us to describe a new class of theories within the GPT formalism --- \emph{prism theories} --- which will allow us to examine familiar geometric patterns in spaces of increasing dimension. 
We are particularly interested in finding out to what extent GPTs can exhibit an enhanced memory capacity compared to classical and quantum  theories.

The paper is organised as follows. In Section~\ref{sec:Problem} we expand upon and formalise the problem we are addressing --- establishing the relationship between a physical theory and the kind of incomplete memory which could be constructed within it. Section~\ref{sec:GPTs} reviews the GPT formalism and introduce some well known theories for reference. Section~\ref{sec:PairwisePerfect} holds the first main result: we  prove that a particular class of theories (those with hypercubic state spaces) are optimal for housing incomplete memories %when the measurements used to probe them are dichotomous. 
that can retrieve one lost bit. This is obtained by invoking a seminal result by Danzer and Gr\"unbaum~\cite{Danzer1962}. In Section~\ref{sec:beyondPairwise} we begin to search for the optimal theory when 
%this restriction is lifted. 
the number of bits to be retrieved is arbitrary. There we prove our second main result, %Theorem~\ref{thm:beyond_pairwise}, 
which gives the scaling of the minimal dimension of a GPT that can host very large incomplete memories capable of retrieving a fixed number of lost bits. In both of our main results, GPTs are shown to outperform classical and quantum associative memories \emph{exponentially}. In Section~\ref{sec:beyondNum} we recast the task of determining  the largest $N$-wise mutually distinguishable set for a given GPT as the convex problem of finding the maximum $N$-clique on an $N$-regular hypergraph.
%, for which we provide deterministic algorithms. 
%and we use measurements with more than two outcomes.
% In particular, the result obtained in section~\ref{sec:PairwisePerfect} will prompt us to make a conjecture which would generalise our findings to arbitrary measurement types.
%Exploring this will necessitate that we make some connections between GPTs and graph theory which are, to our knowledge, novel. We will develop an algorithmic approach to determining the memory capacity of GPTs, but discover that it is \textbf{\textsc{NP}} hard in general.
Finally, we conclude in Section~\ref{sec:Discussion}.

\subsection{The problem} \label{sec:Problem}

We will focus on the simplest type of incomplete perfect memory, whose general working principle is as follows. We begin by storing in it an $m$-bit string $x\in \{0,1\}^m$ by means of a suitable encoding. The value of $x$ is then forgotten, with the only remaining record being stored in the memory. Later, we are given $N$ $m$-bit strings {$Y = \{y_1,\dots,y_N\}$}, with $y_i\in \{0,1\}^m,\,\, i=1,\dots,N$, with the promise that one of these matches the original string, $y_i=x$. Our task is to determine $x$ by making a suitable measurement on our device. The memory is called incomplete if the largest achievable $N$ satisfies $N_{\max}<2^m$, and perfect if the recovery can be achieved with unit success probability for all choices of $x$ and $Y$, with the constraint that $Y$ has cardinality $N$.
%The memory is perfect if this can be achieved with unit success probability for all choices of $x$ and of the set $\boldsymbol{Y}$, subject to the given constraints.

If we model physical systems in terms of GPTs, the problem can be equivalently seen as asking for a GPT $A$ and an encoding function $\rho:\{0,1\}^m \to \Omega_A$, with $\Omega_A$ being the state space on $A$, such that any $N$ distinct states $\rho(x_1),\ldots, \rho(x_N) \in \Omega_A$ are perfectly distinguishable. Equivalently, we could demand the existence of $2^m$ states $\rho_1,\ldots,\rho_{2^m}\in \Omega_A$ that are $N$-wise mutually distinguishable, meaning that any $N$ of them are (jointly) perfectly distinguishable. We will formalise our notions of {\emph{perfect distinguishability} and \emph{mutual $N$-wise distinguishability}} in Definitions~\ref{def:perfectly_distinguishable} and~\ref{def:mutually_distinguishable}. Especially the former concept has attracted considerable interest recently~\cite{Arai2019, Yoshida2020}.

In order to assess the capacity of a memory in system $A$, we will want to quantify the effective \emph{compression} operated by the encoding $\rho$. This presents a problem; we cannot count the number of bits or qubits in the system $A$, because this will be modelled by a GPT that is in general neither classical nor quantum. However, there \emph{is} a universal way to quantify how `large' a GPT is: its dimension. Since a classical $m$-bit system can be represented by a GPT of dimension $2^m$, we could employ the logarithm $\log_2 d$ of the dimension $d=\dim V$ of a certain GPT as an effective measure of its memory capacity. Along the same lines, we could employ the \emph{compression factor}
\begin{equation}
\kappa = \frac{\text{\# encoded bits}}{\log_2 \dim V} = \frac{m}{\log_2 d}
\label{eq:compression}
\end{equation}
to assess the quality of the scheme. {The {$N$-wise} compression factor, denoted with $\kappa(N, m)$, is the maximum such $\kappa$ that is achievable with all possible GPTs. It is a universal function of the pair $(N,m)$, as the optimisation does away with the degree of freedom represented by the choice of the underlying GPT. Clearly, it is given by $\kappa(N,m)=\frac{m}{\log_2 d(N,m)}$, where $d(N,m)$ is the minimum $d$ such that a $d$-dimensional GPT hosting an $N$-wise mutually distinguishable set of states of cardinality $2^m$ can be found.}
This discussion allows us to precisely state our problem as follows.

\begin{center}
\begin{minipage}{.8\textwidth}
%For all positive integers $m\in \N_+$, determine the minimum dimension $d_n=\dim V$ of a GPT $(V,C,u)$ with the property that there exists $2^n$ states $\rho_1,\ldots,\rho_{2^n}\in \Omega\coloneqq C \cap u^{-1}(1)$ that are {$N$-wise} perfectly distinguishable.
{\textbf{Problem.}
For all pairs of positive integers $N,m$, compute $\kappa(N,m)$, i.e.\ determine the minimum dimension of a GPT that can host an $N$-wise mutually distinguishable set of states of cardinality $2^m$.}
\end{minipage}
\end{center}

\begin{rem} \label{rem:NwiseScaling}
Given $n$ strings encoded into a system $A$, and granted that those states are all pairwise perfectly distinguishable, it will take (at most) %$n \choose 2$ measurements to uniquely identify the desired state,
$n-1$ measurements to uniquely identify the desired state via a tournament-like method, granted that the measurements are non-disturbing (see~\cite{Chiribella-pur}). Alternatively, if the measurements \emph{are} disturbing, we would require an equivalent number of copies of the system $A$. If instead those states are mutually $N$-wise distinguishable, these numbers reduce to {$\ceil{\frac{n-1}{N-1}}$}.

%This motivates the exploration of perfect distinguishability beyond pairs of states, which we will come to shortly.
\end{rem}

% Clearly, $\kappa = \frac{n}{\log_2 d_n}$.

Before we get to the formalism of GPTs, {by means of which we will explore more} exotic theories, we can first examine the performance of the most familiar {ones}: quantum and classical mechanics. For these examples we choose $N=2$, so that we are finding the maximum number of pairwise distinguishable states which a system can store.

\begin{itemize}
\item \emph{Classical theory.} If two classical probability distributions over an alphabet $\pazocal{X}$ are pairwise perfectly distinguishable it means that they have disjoint supports inside $\pazocal{X}$. If $2^n$ probability distributions on $\pazocal{X}$ are pairwise perfectly distinguishable, we deduce that their supports $Y_i$ are all disjoint, and therefore $d= |\pazocal{X}| \geq \sum_{i=1}^{2^n} |Y_i| \geq 2^n$. Expressed in words, this entails that in order to accommodate $2^n$ pairwise perfectly distinguishable states, a classical system must have dimension at least $2^n$. This lower bound is trivially tight, so that the $N=2$ compression factor of classical theories is precisely $1$.

\item \emph{Quantum theory.} If $2^n$ quantum states are pairwise perfectly distinguishable, their supports must be pairwise orthogonal. This means that the total dimension \emph{of the Hilbert space} is at least $2^n$. Since the dimension of quantum mechanics as a GPT is the \emph{square} of the Hilbert space dimension (cf.~\eqref{dim quantum}), we see that a quantum system capable of accommodating $2^n$ pairwise perfectly distinguishable states must have dimension at least $2^{2n}$. Again, this lower bound is easily seen to be tight{, entailing that the $N=2$ compression factor for quantum theory is precisely $1/2$.}
\end{itemize}

{Since their compression factors are at most $1$}, classical \emph{as well as quantum} theory perform rather poorly at the task we are interested in here.

In~\cite{PR-boxes} Popescu and Rohrlich famously showed that a hypothetical `super-quantum' theory could outperform quantum mechanics at non-local tasks. However, results in~\cite{Lee2015a, Lee2016a} indicate that such exotic theories may not beat quantum theory in terms of computational capacity. Here we will see how other theories fare at the task of {implementing an} associative memory and, in particular, seek out the optimal theory --- that  with the highest compression ratio defined above.

\section{General probabilistic theories}\label{sec:GPTs}

{
Throughout this Section we will formally introduce and discuss general probabilistic theories. We point the interested reader to Ref.~\cite{lamiatesi, mueller2020, Plavala2021} for more details and a thorough operational justification of the construction described here.

We start by fixing some terminology. A subset $C\subseteq V$ of a finite-dimensional, real vector space $V$ is called a \textbf{cone} if it is closed under positive scalar multiplication. It is called a \textbf{proper cone} if in addition it is (i)~convex; (ii)~salient, that is, $C\cap (-C)=\{0\}$; (iii)~spanning, meaning that $C-C=V$; and (iv)~topologically closed.\footnote{Since we are in finite dimension, there is a unique Hausdorff topology on $V$, which we do not need to specify. For instance, it is induced by any Euclidean norm.}

In what follows, we will denote the dual vector space to $V$, i.e.\ the space of linear functionals $V\to \R$, with $V^*$. If $C\subset V$ is a cone, we can construct its \textbf{dual cone} inside $V^*$ as $C^*\coloneqq \left\{f\in V^*:\, f(x)\geq 0\ \forall\, x\in C \right\}$. If $C$ is proper then so is $C^*$, and moreover $C^{**} = C$ modulo the canonical identification $V^{**}=V$. A functional $f\in C^*$ is also said to be positive; it is \textbf{strictly positive} if $f(x)>0$ for all $x\in C$ with $x\neq 0$. It can be verified that strictly positive functionals are precisely those in the topological interior of $C^*$, denoted by $\inter\left(
C^*\right)$.

%The state space $\Omega$ is then recovered as the section of $C$ identified by the equation $u=1$, where $u\in V^*$ is a `normalising' functional, called the \emph{order unit}, belonging to the dual vector space $V^*$ and (v)~strictly positive on $C$, i.e.\ such that $u(x)> 0$ for all $x\in C$ with $x\neq 0$. We can summarise the above discussion by giving an abstract definition of a GPT as any triple $(V,C,u)$, where $V$ is a real finite-dimensional vector space, $C\subset V$ is a proper cone inside it, and $u\in V^*$ is a strictly positive functional on $C$.

\begin{Def}[General probabilistic theories] \label{GPT_def}
A \textbf{general probabilistic theory} (GPT) is a triple $(V, C, u)$ consisting of a real, finite-dimensional vector space $V$, a proper cone $C\subset V$, and a strictly positive functional $u\in \inter\left(C^*\right)$, called the \textbf{order unit}. We call $d\coloneqq \dim V$ the \textbf{dimension} of the GPT, and $\Omega\coloneqq C\cap u^{-1}(1) = \left\{ x\in C:\, u(x)=1\right\}$ its \textbf{state space}. A \textbf{pure state} is an extreme point\footnote{An extreme point of a convex set $X$ is a point $x\in X$ such that $x=py+(1-p)z$ for $p\in (0,1)$ and $y,z\in X$ implies that $y=z=x$. The set of extreme points of $X$ will be denoted by $\ext(X)$.} of $\Omega$. An \textbf{effect} is a functional $e\in V^*$ such that $e(\omega)\in [0,1]$ for all $\omega\in \Omega$. We will denote the set of effects with $E = C^*\cap \left(u-C^*\right)$. A \textbf{measurement} is a finite collection $(e_i)_{i\in I}$ of effects $e_i\in E$ such that $\sum_{i\in I} e_i = u$.
%A \textbf{general probabilistic theory} (GPT) is a triple $(V, C, u)$ consisting of a real, finite-dimensional vector space $V$ ordered by a closed spanning cone $C$, and of a strictly positive element $u\in \inter(C^{*})$. We call $d\coloneqq \dim V$ the \textbf{dimension} of the GPT, and $\Omega\coloneqq C\cap u^{-1}(1) = \left\{ x\in C:\, u(x)=1\right\}$ its \textbf{state space}. A given state space can equivalently be described as a convex sum over a restricted set of elements $\Omega = \text{conv}(\Omega_P)$, where $\Omega_p$ is the set of elements which cannot, themselves, be de-composed into a convex combination of other elements of $\Omega$. We will call $\Omega_P = \{\omega_i\}$ the \textbf{extremal states} of the theory (in quantum theory, these are the pure states). Similarly, a GPT requires a set of \textbf{effects}; these are linear functionals $e(\omega\in\Omega)\to[0,1]$. Operationally, states correspond to equivalence classes of physical preparations, and effects to equivalence classes of possible measurement outcomes. Similarly to our states, the set of effects $E$ can be defined as a convex sum over extremal points $E={\rm conv}(\{e_i\})$.
\end{Def}

\begin{rem}
The restriction to finite-dimensional spaces is made for purely technical reasons, as it simplifies the treatment considerably. However, the GPT framework makes perfect sense in infinite dimension as well --- in fact, GPTs were initially conceived to accommodate also this case~\cite{LUDWIG,FOUNDATIONS,Davies-1970} (see also~\cite[Chapter~1]{lamiatesi}).
\end{rem}

The state space $\Omega$ as well as the set of effects $E$ of a given GPT are always compact convex sets. As such, they can be equivalently described as the convex hulls of their extreme points (in the case of $\Omega$, these are just the pure states of the theory).
%\footnote{A point $a\in A\subseteq \R^n$ is said to be extreme (in $A$) if $a=pb+(1-p)c$ for $p\in (0,1)$ and $b,c\in A$ implies that $b=c=a$.}
Two extreme points of $E$ are always $0$ and the order unit $u$.

\medskip
\noindent {\bf Note.}
%Here we will be considering a subset of GPTs. We will look at the following aspects, and limitations:
It is worthwhile to point out some subtleties concerning the interpretation of the above definition of a GPT that should be kept in mind:
\begin{itemize}
    \item We implicitly assume the \emph{no restriction hypothesis}~\cite{no-restriction}. This states that all abstract measurements as constructed in Definition~\ref{GPT_def} are actually physically implementable, and entails that defining the state space $\Omega$ of a theory is sufficient to completely determine its local structure. We deem it a fairly natural assumption, since GPTs are operationally motivated in the first place -- state and effect spaces can be thought of as mutually defining -- and the class of restricted GPTs can do no better than the class of unrestricted GPTs for this particular task.
    \item We are considering only those theories with finite-dimensional state spaces (for an exploration beyond this, see~\cite[Chapter~1]{lamiatesi}).
    \item We are only dealing with the \emph{reliable} states and effects for a theory. Operationally, this is equivalent to having preparation and measurement procedures which always behave as desired (for example, we can produce specific states deterministically).
    \item We are not examining non-local correlations or entanglement-like features available in different GPTs, which are often the subject of enquiry in the GPT literature~\cite{telep-in-GPT, Shahandeh2021, Schmid2020, DAriano2020, cones-1, cones-2, cones-3}. However, although we are only considering the geometries of single systems, it is worth emphasising that these do impact upon which non-local correlations can be attained~\cite{Janotta2011, Short2010, cones-1, cones-2, cones-3}.
\end{itemize}
% In the cases we will consider,  the function enacted by an effect on a state is simply the inner product $e(\omega)=e\cdot\omega$. A measurement consists of a set of effects such that they sum to the unit effect; $\mathcal{M}=\{e_i\}:\sum_i e_i=u$. In this work we will be concerned with binary measurements, such that $\mathcal{M}=\{e, u-e\}$.

}

\subsection{Some Example Theories}

\begin{ex}[Classical probability theory] \label{ex class}
%States in a classical probabilistic theory correspond to the surfaces on a simplex, with the vertices of said simplex giving the deterministic preparations of particular outcomes. The unit effect is simply $u=(1...1)$, a vector with $d$ elements. Intuitively: the surfaces of simplices are defined by the convex combinations of their vertices; this is identical to a classical probability distribution, which is defined by the convex combination of deterministic (mutually exclusive) outcomes --- hence their correspondence.
States in a classical probability theory are simply probability distributions over some finite alphabet $\pazocal{X}$. The corresponding GPT will have dimension $d=|\pazocal{X}|$, where $|\pazocal{X}|$ is the size of $\pazocal{X}$. Formally, it can be defined as a triple $\big(\R^{d},\, \R^{d}_{+},\, u\big)$, where $\R^{d}_{+}\coloneqq\{ x\in \R^{d}:\, x_{i}\geq 0\ \forall\, i=1,\ldots, d\}$ is just the positive orthant, and the unit effect is a functional acting as $u(y)=\sum_{i=1}^{d} y_i$ for all $y\in \R^{d}$. The state space is therefore formed by all non-negative vectors $x\in \R_+^d$ such that $u(x) =\sum_i x_i=1$; geometrically, this set is shaped as a simplex with $d$ vertices, which we denote by $\mathcal{S}_d$.
% We depict some examples in Figure~\ref{fig:simplexPrisms}.
\end{ex}

\begin{ex}[Quantum mechanics] \label{ex QM}
The quantum mechanical theory of a $k$-level system can also be phrased in the GPT language. Formally, we can define it as the triple $\left( \mathrm{H}_k,\, \mathrm{PSD}_k,\, \Tr \right)$, where $\mathrm{H}_k$ is the real vector space of $k\times k$ Hermitian matrices, $\mathrm{PSD}_k$ is the cone of $k\times k$ positive semidefinite matrices{, and $\Tr$ is the trace functional.} Observe that the \emph{real} dimension of $k$-level quantum mechanics is
\bb
\dim \mathrm{H}_k = k^2.
\label{dim quantum}
\ee
\end{ex}

\begin{ex}[$n$-gon theories]\label{ex:ngon_theories}
$n$-gon theories (sometimes referred to as polygon theories), are those in which the state space is described by a regular $n$-sided polygon. These theories are well studied~\cite{Massar2014a, Janotta2011, Heinosaari2019, Kobayshi2017, Pfister2013}, and contain the local structure of Popescu--Rohrlich boxes as a particular case ($n=4$). Interestingly, there is a general difference between those in which $n$ is odd and those in which $n$ is even: for odd $n$, the theories are strongly self-dual, meaning that the {dual cone $C^*$ is isomorphic to $C$ via an isomorphism mediated by a positive definite scalar product. For even $n$, the theories are only weakly self-dual, meaning that $C$ and $C^*$ are merely linearly isomorphic.}
% Details of how to construct the states and effects for these theories can be found in Appendix~\ref{sec:PolygonTheories}.
\end{ex}

\begin{rem}
One particularly nice property of $n$-gon theories is that they give a (restricted) version of both quantum and classical theories in limiting cases. In the limiting case of $n=2$, the polygon collapses to the line segment; this can be taken to represent a stochastic classical bit (such as a coin). In the other extreme, at $n=\infty$, the `polygon' describes a circle --- which can be thought of representing a slice through the Bloch sphere, such as the slice of states with real-valued coefficients $|\psi\rangle = \alpha|0\rangle+\beta|1\rangle$, with $\alpha, \beta \in \R$ and $\alpha^2+\beta^2=1$.
\end{rem}

% We will be exploring the memory capabilities of theories in two important aspects: the fundamental geometry governing the distribution of pure states for a single system within a theory, and the dimensionality of the state space for that system. Of the theories outlined above, it is only classical theory which possesses a ready generalisation to higher dimensions. Here we introduce a framework by which we can take a geometric pattern and extend it to a higher number of dimensions.

{
\subsection{Perfect distinguishability}

Now that we have a rigorous definition of GPT in place, we can also give a precise meaning to the various notions of perfect distinguishability employed in this paper. We start with the basic definition of perfect distinguishability for a set of states in a GPT. For additional details and further motivation we refer the reader to~\cite{Arai2019, Yoshida2020}.

\begin{Def}[Perfect distinguishability] \label{def:perfectly_distinguishable}
Let $(V,C,u)$ be a GPT with state space $\Omega$. We say that some finitely many states $\{\omega_i\}_{i\in I}\subseteq \Omega$ are \textbf{perfectly distinguishable} if there exists a measurement $(e_i)_{i\in I}$ such that $e_i(\omega_j) = \delta_{i,j}$ for $i,j\in I$.
\end{Def}

We can now give a notion of mutual distinguishability for sets of states.

\begin{Def}[Mutual $\boldsymbol{N}$-wise distinguishability] \label{def:mutually_distinguishable}
Let $(V,C,u)$ be a GPT with state space $\Omega$. A set of states $\mathcal{S}\subseteq \Omega$ is said to be \textbf{mutually $\boldsymbol{N}$-wise distinguishable} if every subset $S\in\mathcal{S}$ of cardinality $|S|=N$ is
% distinct states $\omega_1,\ldots, \omega_N\in \mathcal{S}$ in $\mathcal{S}$ are
perfectly distinguishable as per Definition~\ref{def:perfectly_distinguishable}. If $N=2$ we also say that the states in $\mathcal{S}$ are \textbf{pairwise perfectly distinguishable}.
\end{Def}

\begin{rem}
The fact that $\{\omega_1,\omega_2\}$ and $\{\omega_2,\omega_3\}$ are separately perfectly distinguishable does \emph{not} imply, in general, that $\{\omega_1,\omega_2,\omega_3\}$ are  perfectly distinguishable. More generally; the union of some sets which are $N$-wise distinguishable is not necessarily \emph{mutually} $N$-wise distinguishable itself.
\end{rem}

%The key here is the word \emph{mutually}. By this we mean that any subset of $N$ states within $\Omega_K$ must be {$N$-wise} distinguishable.

In what follows we will be interested in the minimal GPT dimension that is needed in order to achieve mutually $N$-wise distinguishable sets with a prescribed number of elements, or, vice versa, in the maximal number of elements that a mutually $N$-wise distinguishable set of states can have in GPTs of a fixed dimension. We thus formalise the following definition.

\begin{Def} \label{def:compression-factor}
For two positive integers $N,m$, we denote with $d(N,m)$ the \tcb{minimum dimension $\dim V_A$ among all GPTs $A=(V_A,C_A,u_A)$ having the property that} 
%minimal dimension of a GPT $A$ such that 
the corresponding state space $\Omega_A=C_A\cap u_A^{-1}(1)$ contains a set of mutually $N$-wise distinguishable states of cardinality $2^m$. The corresponding \textbf{compression factor} is defined by
\begin{equation}
    \kappa(N,m) \coloneqq \frac{m}{\log_2 d(N,m)}\, .
\end{equation}
\end{Def}

If we accept the assumptions leading to the GPT framework as we have defined it above, calculating or estimating $\kappa(N,m)$ from above (equivalently, calculating or estimating $d(N,m)$ from below) amounts to establishing the ultimate physical bounds to the compression of information realised by an incomplete but perfect memory. The rest of the paper is devoted to the understanding of these quantities and to their exact computation in a few interesting cases.

We start by looking at the most extreme case, that where the memory is in fact complete, i.e.\ the information can be retrieved. This corresponds to setting $N=2^m$. In this case,  even GPTs do not grant any advantage over classical probability theory.

\begin{lemma} \label{lemma:kappa=1}
For all positive integers $m$, it holds that $d(2^m,m)=2^m$ and hence $\kappa(2^m,m)=1$. In other words, there exists a GPT (namely, classical probability theory) of dimension $2^m$ hosting $2^m$ perfectly distinguishable states, but no GPT of smaller dimension enjoying that same property.
\end{lemma}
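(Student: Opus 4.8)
The plan is to establish the two inequalities $d(2^m,m)\le 2^m$ and $d(2^m,m)\ge 2^m$ separately; together with Definition~\ref{def:compression-factor} these give both claims at once. A preliminary observation that streamlines everything: by Definition~\ref{def:mutually_distinguishable}, a set of $2^m$ states is mutually $2^m$-wise distinguishable if and only if it is perfectly distinguishable as a whole, since the unique subset of a $2^m$-element set having cardinality $N=2^m$ is the set itself. So the problem of hosting $2^m$ mutually $2^m$-wise distinguishable states is literally the problem of hosting $2^m$ perfectly distinguishable states, and $d(2^m,m)$ is the minimal GPT dimension in which this is possible.

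For the upper bound I would exhibit classical probability theory over an alphabet of size $2^m$, i.e.\ the GPT $\big(\R^{2^m},\R^{2^m}_+,u\big)$ of Example~\ref{ex class}, whose state space is the simplex $\mathcal{S}_{2^m}$ of dimension $2^m$. Its $2^m$ vertices are the point masses $\delta_x$, $x\in\{0,1\}^m$, and the coordinate functionals $e_x(y)=y_x$ form a measurement: $\sum_x e_x=u$ and $e_x(\delta_{x'})=\delta_{x,x'}$. Hence these $2^m$ states are perfectly distinguishable in the sense of Definition~\ref{def:perfectly_distinguishable}, which shows a GPT of dimension $2^m$ with the required property exists, so $d(2^m,m)\le 2^m$.

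For the lower bound, let $A=(V,C,u)$ be any GPT whose state space contains perfectly distinguishable states $\omega_1,\dots,\omega_{2^m}$, witnessed by a measurement $(e_i)_i$ with $e_i(\omega_j)=\delta_{i,j}$. The key step is that these states are linearly independent in $V$: if $\sum_j c_j\omega_j=0$ for scalars $c_j$, applying the linear functional $e_i$ gives $c_i=\sum_j c_j e_i(\omega_j)=0$ for every $i$. Linear independence of $2^m$ vectors forces $\dim V\ge 2^m$, hence $d(2^m,m)\ge 2^m$. Combining the bounds yields $d(2^m,m)=2^m$, and then $\kappa(2^m,m)=m/\log_2 2^m=1$.

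There is no genuine obstacle here: the argument is short, and the only point requiring a moment's care is the degeneracy of mutual $N$-wise distinguishability when $N$ equals the set's cardinality, noted above. The linear-independence step is exactly the mechanism behind the classical and quantum lower bounds discussed earlier in the paper, here specialised to the extreme regime $N=2^m$, where even the general GPT framework yields nothing beyond the classical bound.
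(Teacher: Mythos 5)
Your proposal is correct and follows essentially the same route as the paper: the lower bound via linear independence of perfectly distinguishable states (which the paper states without spelling out the one-line functional argument you give), and the upper bound via the classical simplex of dimension $2^m$, which the lemma itself names as the achieving theory. Your explicit remark that mutual $2^m$-wise distinguishability of a $2^m$-element set degenerates to joint perfect distinguishability is a fair point of care, but it does not change the argument.
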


\begin{proof}
Since perfectly distinguishable states must be linearly independent,  the dimension of the host vector space of any GPT accommodating $2^m$ perfectly distinguishable states must be at least $2^m$.
\end{proof}

The above Lemma~\ref{lemma:kappa=1} is slightly disappointing, as it tells us that even GPTs cannot perform better than classical probability theory at the implementation of a perfect and complete memory. However, this state of affairs changes dramatically when we consider smaller values of $N$, i.e.\ when we look instead at perfect but incomplete memories. We will see how this is possible in the next Section.
}

\section{Pairwise distinguishability}\label{sec:PairwisePerfect}

%We now make the claim that in the case of $N=2$, which is to say, when our measurements are binary measurements on pairs of states; there exists an optimal class of theories, readily generalisable to any dimension.

In this Section we show that a compression factor much larger than $1$, and indeed of order $m$ up to logarithmic factors, is achievable when $N=2$. Even more, we give an exact expression for the function $\kappa(2,m)$.

\begin{thm} \label{thm:pairwise}
For all positive integers $m$, it holds that $d(2,m)=m+1$ and hence
\begin{equation}
\kappa(2,m) = \frac{m}{\log_2 (m+1)}\, .
\label{kappa_2_m}
\end{equation}
In other words, there exists a GPT of dimension $m+1$ hosting \tcb{$2^m$} pairwise distinguishable states, but no GPT of dimension $m$ or lower enjoying this same property.
%When $N=2$, there exists a GPT $A_m$ for any $m$ such that $\kappa(A_m, 2) = \frac{m}{\log_2 m+1}$ and $d_m = m+1$; and further, that this is optimal, in the sense that this is a lower bound on $d_m$ for any $m$ given that $N=2$.
\end{thm}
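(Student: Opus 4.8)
The plan is to prove the equality $d(2,m)=m+1$ by establishing two matching bounds. For the upper bound $d(2,m)\leq m+1$, I would exhibit a concrete $(m+1)$-dimensional GPT whose state space contains $2^m$ pairwise perfectly distinguishable states. The natural candidate is a hypercubic state space: take $V=\R^{m+1}$ with the first coordinate playing the role of the normalization, so that the state space is an $m$-dimensional hypercube $[0,1]^m$ (or $[-1,1]^m$, up to affine rescaling) embedded at height $1$ in $\R^{m+1}$. Encode the string $x\in\{0,1\}^m$ as the vertex $\rho(x)$ of the hypercube whose $i$-th coordinate is $x_i$. Given two distinct vertices $\rho(x)\neq\rho(y)$, there is a coordinate $i$ with $x_i\neq y_i$; the pair of effects $\{e,u-e\}$ where $e$ reads off the $i$-th coordinate (i.e.\ $e(\omega)=\omega_i$ in suitable coordinates) distinguishes them perfectly, since $e$ takes value $0$ on one vertex and $1$ on the other. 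The point to check is that this $\{e,u-e\}$ is a legitimate measurement: $e$ and $u-e$ must both lie in the effect cone $E=C^*\cap(u-C^*)$, which holds precisely because the cone $C$ dual to the hypercube (a cross-polytope cone) has these coordinate functionals as valid effects — this is exactly where the hypercube/cross-polytope duality does the work. So $2^m$ vertices of the hypercube are pairwise distinguishable in an $(m+1)$-dimensional GPT.

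For the lower bound $d(2,m)\geq m+1$, equivalently: no GPT of dimension $d\leq m$ can host $2^m$ pairwise perfectly distinguishable states. This is where I expect the real content to sit, and it is where the cited theorem of Danzer and Gr\"unbaum enters. The observation is that if $\omega_1,\dots,\omega_n\in\Omega$ are pairwise perfectly distinguishable, then for each pair $\{i,j\}$ there is an effect $e_{ij}$ with $e_{ij}(\omega_i)=1$, $e_{ij}(\omega_j)=0$; geometrically this means the hyperplane $\{e_{ij}=1/2\}$ (say) strictly separates $\omega_i$ from $\omega_j$ while the slab $0\leq e_{ij}\leq 1$ still contains the whole state space since $e_{ij}$ is an effect. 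The right way to package this: pairwise distinguishability of $n$ points inside the $(d-1)$-dimensional compact convex body $\Omega$ forces $\Omega$ to contain $n$ points that are "antipodal-like" in the sense relevant to Danzer--Gr\"unbaum, and their theorem bounds the number of such points in a $(d-1)$-dimensional convex body by $2^{d-1}$. Concretely, $n=2^m\leq 2^{d-1}$ gives $m\leq d-1$, i.e.\ $d\geq m+1$. I would phrase the reduction carefully: the effects witnessing pairwise distinguishability give, for each pair, a pair of parallel supporting-type hyperplanes of the body, and Danzer--Gr\"unbaum's result on the maximal number of points of a convex body no two of which can be "screened off" within the body caps the count at $2^{\dim}$.

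The main obstacle is the precise translation between the operational statement (existence of distinguishing measurements, i.e.\ effects in $E=C^*\cap(u-C^*)$) and the purely convex-geometric hypothesis of the Danzer--Gr\"unbaum theorem. I would need to verify that "$\{\omega_i\}$ pairwise perfectly distinguishable in $(V,C,u)$" implies exactly the combinatorial/geometric configuration those authors consider inside the body $\Omega\subset u^{-1}(1)\cong\R^{d-1}$ — in particular handling the constraint that the separating functionals are effects (bounded between $0$ and $1$ on all of $\Omega$), not arbitrary linear functionals, and that the bound $2^{\dim\Omega}=2^{d-1}$ comes out with the correct exponent. Once that dictionary is set up, the inequality $2^m\leq 2^{d-1}$ and hence $m+1\leq d$ is immediate, and combined with the hypercube construction it pins down $d(2,m)=m+1$; formula~\eqref{kappa_2_m} then follows directly from Definition~\ref{def:compression-factor}.
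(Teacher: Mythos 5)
Your proposal is correct and follows essentially the same route as the paper: the hypercubic GPT of dimension $m+1$ with coordinate-reading effects for the upper bound, and the Danzer--Gr\"unbaum theorem for the matching lower bound, via exactly the dictionary the paper uses (an effect $e_{ij}$ with $e_{ij}(\omega_i)=1$, $e_{ij}(\omega_j)=0$ and $0\leq e_{ij}\leq 1$ on $\Omega$ yields the two parallel hyperplanes $\{e_{ij}=1\}$ and $\{e_{ij}=0\}$ supporting the point set at $\omega_i$ and $\omega_j$, so $2^m\leq 2^{d-1}$). The translation step you flag as the ``main obstacle'' is indeed the only thing left to write out, and it goes through exactly as you sketch it.
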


The above result\tcb{, whose proof can be found at the end of Section~\ref{subsec:hypercube},} is remarkable because it provides an example of a task at which GPTs outperform both classical and quantum theories dramatically. In fact, as we saw in Section~\ref{sec:Problem} the compression factor $\kappa(2,m)$ for such theories is just a constant, while Theorem~\ref{thm:pairwise} tells us that in the GPT world it can be made much larger, of the order of $m$ (up to a logarithmic factor). Another notable aspect of Theorem~\ref{thm:pairwise} is that it does not report an estimate but rather an exact computation of the figure of merit that is of interest here, thus establishing the ultimate physical limits to this very simple type of incomplete (perfect) memory.

The discussion and proof of Theorem~\ref{thm:pairwise} occupies the rest of the present Section. More in detail, in Section~\ref{subsec:d=3} we discuss the simplest non-trivial case of $3$-dimensional GPTs, proving with a delightfully simple argument that $d(2,2)=3$, or equivalently $\kappa(2,2) = 2/\log_2(3)$. Section~\ref{subsec:hypercube} is devoted to the presentation of the general construction that achieves the best compression factor~\eqref{kappa_2_m} among all GPTs. In~\ref{appendix:DG} we revisit the proof of the Danzer--Gr\"unbaum theorem, showing that it implies directly the optimality of the above construction.
%A key technical step rests on the application of a celebrated theorem by Danzer and Gr\"unbaum~\cite{Danzer1962}, whose statement and proof we review in Appendix~\ref{sec:proof}.

\subsection{Limits in $d=3$} \label{subsec:d=3}

Before commencing, %we make a disclaimer 
\tcb{a note} on geometric terminology. We say that a hyperplane $V\subset \R^n$ \emph{supports} a set $X$ in a point $x\in V\cap X$ if $V$ touches $X$ in $x$ without `cutting through' it, in other words, if the whole $X$ lies in one of the two closed half-spaces determined by $V$, with $x\in V\cap X$. Formally:

\begin{Def}
Let $X\subseteq \R^n$ be a subset of a Euclidean space. We say that a hyperplane $V\subset \R^n$ supports $X$ in a point $x\in X$ if: (i)~$x\in V$; and (ii)~$X$ is entirely contained inside one of the closed half-spaces determined by $V$.
\end{Def}

Let us consider a $d$-dimensional GPT with state space $\Omega\subset \R^{d-1}$ and the set of (distinct) states $\{\rho_i\}_{i=1,\ldots, 2^m}\subset \Omega$. Assume that any pair $\{\rho_i, \rho_j\}$ with $i\neq j$ is perfectly distinguishable by means of a measurement $(e_{ij},\, u-e_{ij})$, as per Definition~\ref{def:perfectly_distinguishable}. Explicitly, this means that $e_{ij}(\rho_i)=1$ and $e_{ij}(\rho_j)=0$. Note that the set of vectors $v$ such that $e_{ij}(v)=0$ and the set of vectors $w$ such that $e_{ij}(w)=1$ form two parallel hyperplanes $V$ and $W$. Note that $\rho_j\in V$ and $\rho_i\in W$. Clearly, since $0\leq e_{ij}(\omega)\leq 1$ for all states $\omega$, the whole $\Omega$ lies between $V$ and $W$. We can say that $W$ and $V$ support the state space $\Omega$ in $\rho_i$ and $\rho_j$, respectively. Vice versa, this condition is entirely equivalent to $\rho_i$ and $\rho_j$ being perfectly distinguishable. To get a clear geometric intuition it is instructive to explore  the special case where the state space is $2$-dimensional; with our convention, this corresponds to the case where $d=3$, because the global GPT will feature a $3$-dimensional cone whose section is our $2$-dimensional state space.

We thus consider a $3$-dimensional GPT with states confined to a set $\Omega\subset \R^{2}$. {The situation is as depicted in Figure~\ref{fig:support}. The two states $\rho_i,\rho_j\in \Omega$ in Figure~\ref{fig:support} are indeed perfectly distinguishable, because the entire set $\Omega$ is enclosed between two parallel lines supporting it in $\rho_i$ and $\rho_j$, respectively. However, one can see that not all pairs among the $6$ states marked with black dots can be perfectly distinguishable.\footnote{Indeed, for example the two dots at the bottom of the grey figure are not. This makes sense, because we see from Theorem~\ref{thm:pairwise} that in dimension $3$ there can be at most $2^{3-1}=4$ states with such property.}}

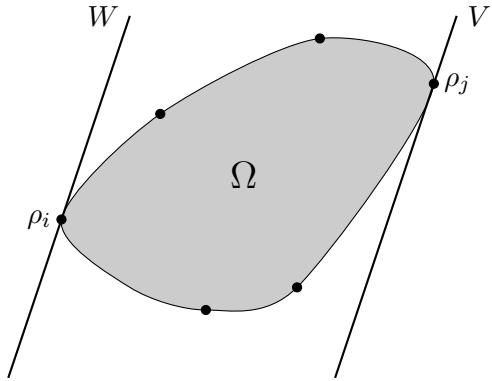
\begin{figure}[ht] \centering
\begin{tikzpicture}
\filldraw [fill=gray!40, even odd rule] plot [smooth cycle] coordinates {(0,0) (1.2,.3) (3,3) (1.5,3.6) (-.6,2.6) (-1.9,1.2) (-1,.3)};
\node at (.5,1.8) {\Large $\Omega$};
\draw[fill=black] (0,0) circle (.6mm);
\draw[fill=black] (1.2,.3) circle (.6mm);
\draw[fill=black] (3,3) node[anchor=west] {$\rho_j$} circle (.6mm);
\draw[fill=black] (1.5,3.6) circle (.6mm);
\draw[fill=black] (-.6,2.6) circle (.6mm);
\draw[fill=black] (-1.9,1.2) node[anchor=east] {$\rho_i$} circle (.6mm);
\draw[thick] (1.7,-.9) -- ++(1.6,4.8) node[anchor=west] {$V$};
\draw[thick] (-2.6,-.9) -- ++(1.6,4.8) node[anchor=east] {$W$};
\end{tikzpicture}
\caption{A set of states in a two-dimensional GPT with state space $\Omega$ --- here we are viewing the state space top-down. If $\rho_i$ and $\rho_j$ are perfectly distinguishable then the two lines $W$ and $V$ \emph{support} the state space in $\rho_i$ and $\rho_j$, respectively.}
\label{fig:support}
\end{figure}

Let us make this discussion a bit more rigorous. Assume that we are given $k$ states $\rho_1,\ldots, \rho_k\in \Omega$, with the promise that they are pairwise perfectly distinguishable. We can ask ourselves: \emph{how large can $k$ be?} The convex hull of $\rho_1,\ldots, \rho_k$ will naturally form a polygon $P\subseteq \Omega$. In fact, we have that every $\rho_i$ must correspond to a vertex of $P$ in order for the perfect distinguishability condition to be obeyed.
Consider now two neighbouring vertices $\rho_i,\rho_{i+1}$ of $P$, as well as the edge connecting them. Call $\alpha_i,\alpha_{i+1}$ the internal angles of $P$ at vertices $\rho_i,\rho_{i+1}$. It can be shown that, in order for $\rho_i, \rho_{i+1}$ to be perfectly distinguishable, it has to hold that $\alpha_i+\alpha_{i+1}\leq \pi$ (cf.~Figure~\ref{two_dim_fig}). Summing over $i=1,\ldots, k$, with the convention that $k+1\equiv 1$, we obtain that
\bb
k\pi \geq \sum_{i=1,\ldots, k} (\alpha_i+\alpha_{i+1}) = 2 \sum_i \alpha_i\, .
\ee
The sum on the right-hand side is just the sum of all internal angles of a convex polygon with $k$ vertices. From elementary geometry, this is well known to be $(k-2)\pi$. Therefore, we obtain the inequality
\bb
k\pi \geq 2(k-2)\pi\, ,
\ee
which yields immediately $k\leq 4=2^2$, in line with Theorem~\ref{thm:pairwise}. This bound is tight, because the four vertices of a square state space correspond to pairwise perfectly distinguishable states --- a more general version of this latter statement will be proved in the next Section.

\begin{figure}[ht] \centering
\begin{tikzpicture}
\coordinate (a) at (0,0);
\coordinate (b) at (3,0);
\coordinate (a') at (1,1.5);
\coordinate (b') at (3.4,1.5);
\draw[fill=black] (a) node[left] {$\rho_i$} circle (.6mm);
\draw[fill=black] (b) node[right] {$\rho_{i+1}$} circle (.6mm);
\draw[thick] (a) -- (b);
\draw[thick, dotted] (-.5,-1) -- ++(2,4);
\draw[thick, dotted] (2.5,-1) -- ++(2,4);
\draw[thick] (a) -- (a');
\draw[thick, dashed] (a') -- ++(.5,.75);
\draw[thick] (b) -- (b');
\draw[thick, dashed] (b') -- ++(.2,.75);
\pic["$\alpha_i$", draw, <->, angle eccentricity=1.4, angle radius=.7cm] {angle=b--a--a'};
\pic["$\alpha_{i+1}$", draw, <->, angle eccentricity=1.5, angle radius=.6cm] {angle=b'--b--a};
\node at (2.3,1.8) {\Large $P$};
\end{tikzpicture}
\caption{A geometric sketch of a possible proof for the simplest non-trivial case $N=2$, $d=3$.}
\label{two_dim_fig}
\end{figure}
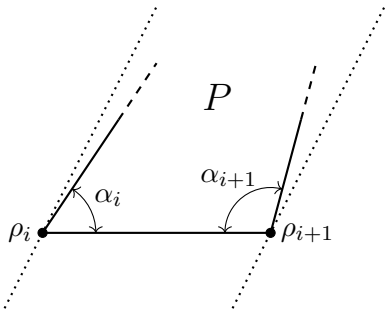

\subsection{Generalisation to arbitrary dimension and optimality} \label{subsec:hypercube}

We now set out to generalise the analysis to any dimension.\footnote{In a related spirit, theories using hyperspheres of generalised dimension, so-called $D$-balls, are discussed in~\cite{Masanes2014, Krumm2019}; the authors aim to isolate the 3-sphere as a the necessary state space for quantum theory based on physical requirements. See also~\cite{ultimate} for a different use of spherical theories.}
In light of the geometric construction discussed in Section~\ref{subsec:d=3} (the situation is entirely analogous to that depicted in Figure~\ref{fig:support} for $d=3$), we can reformulate our problem as follows:
\begin{center}
\begin{minipage}{.8\textwidth}
\textbf{Problem (reformulation).} Determine the minimum $d_m=d(2,m)$ such that there exists a set $X=\{\rho_1,\ldots,\rho_{2^m}\}\subset \R^{d_m-1}$ with the following property: for any two distinct $\rho_i,\rho_j\in X$, there are two parallel hyperplanes $W,V\subset \R^{d_m-1}$ of which one supports $X$ in $\rho_i$ and the other supports $X$ in $\rho_j$.
\end{minipage}
\end{center}

%We now generalise the above case.
We now explain how to achieve a construction with the above properties in dimension $d =m+1$. The argument is quite simple, and it is worthwhile explaining it in words before delving into the mathematical formalism. The state space of the GPT we pick to achieve the bound is shaped as a hypercube of dimension $m$. Since the whole theory includes also multiples of normalised spaces, its dimension is in fact $m+1$. The $2^m$ states we choose correspond to the vertices of the hypercube. The crucial point now is that any two distinct vertices will be sitting each on one of two parallel hyperplanes that enclose the whole state space. Those hyperplanes, that are spanned by two opposite faces of the hypercube, will define the binary measurement needed to discriminate the states in question. This bit of reasoning already shows that any two vertices of the hypercube indeed represent perfectly distinguishable states.

We now make this argument rigorous. Construct the GPT $\left(\R^{m+1}, C_{g,m}, u\right)$, where
\bb
C_{g,n} \coloneqq \left\{ (x_0,x_1,\ldots, x_n)^\intercal\in \R^{m+1}:\ x_0 \geq \max_{1\leq i\leq n} |x_i| \right\}
\label{cubic_cone}
\ee
and moreover $u\left( (x_0,x_1,\ldots, x_m)^\intercal \right) \coloneqq x_0$. The state space of this GPT is clearly a hypercube of dimension $m$. Now, for $\epsilon\in \{\pm 1\}^n$, define $\rho_\epsilon\in \R^{m+1}$ by
\bb
(\rho_\epsilon)_i \coloneqq \left\{\begin{array}{ll} 1 & \text{if $i=0$,} \\[1ex] \epsilon_i & \text{if $i\geq 1$.} \end{array}\right.
\ee
Note that there are exactly $2^m$ distinct choices for $\epsilon$. We deduce that the $2^m$ states $\rho_\epsilon$ are pairwise perfectly distinguishable. To see why, consider $\epsilon,\epsilon'\in \{\pm 1\}^n$ that are distinct. Then, they will differ at some position $i\in \{1,\ldots, m\}$. Without loss of generality, we can assume that $\epsilon_i=+1$ and $\epsilon'_i = -1$. Now, consider the two-element collection $(e_i,u-e_i)$, where the functional $e_i$ is defined by
\bb
e_i \left( (x_0,x_1,\ldots, x_n)^\intercal \right) \coloneqq \frac{x_0+x_i}{2}\, .
\ee
Note that for all $x\in C_{g,m}$ we have that $0\leq e_i(x)\leq u(x)$; hence, the collection $(e_i,u-e_i)$ defines a binary measurement. It is now elementary to verify that
\bb
e_i(\rho_\epsilon) = 1\, , \quad (u-e_i)(\rho_\epsilon) = 0\, , \quad e_i(\rho_{\epsilon'}) = 0\, , \quad (u-e_i)(\rho_{\epsilon'}) = 1\, .
\ee
These are precisely the conditions needed to ensure that $\rho_\epsilon$ and $\rho_{\epsilon'}$ are perfectly distinguishable.\footnote{
These `hypercubic' theories have been employed in a similar spirit by Ver Steer and Wehner~\cite[Claim~6.2]{VerSteeg2009} to construct superior random access codes.}

We have therefore constructed a GPT of dimension $m+1$ which is capable of accommodating $2^m$ pairwise perfectly distinguishable states; {hence,
\begin{equation}
    d(2,m) \leq m+1\, ,\qquad \text{or}\qquad \kappa(2,m)\geq \frac{m}{\log_2(m+1)}\, .
\end{equation}}
What is remarkable here is that those limits far exceed the capabilities of both classical and quantum theory --- each of these have an exponential scaling in the number of dimensions required to store $m$ bits, whereas hypercubic theories scale only linearly. {Equivalently, the compression factor for the case $N=2$ of pairwise perfect distinguishability is at most $1$ for classical and quantum theory, but scales almost linearly in $m$ (up to logarithmic factors) for the best conceivable GPT. This demonstrates a sort of \emph{exponential advantage} of general GPTs over classical and quantum theories.}

It remains to show that the above construction is optimal. From the mathematical standpoint, this is highly non-trivial. To overcome this hurdle, we exploit the reformulation of the problem presented in Section~\ref{subsec:hypercube}: in that form, the problem was posed for the first time by Klee~\cite{KLEE} and was solved not long after by Danzer and Gr\"unbaum~\cite{Danzer1962}. Their solution shows that $d_m=m+1$ is a minimum for any $m$. We restate their result for our convenience below.\footnote{Our poor knowledge of German meant that we employed a translation of the original paper, realised by Rolf Schneider.}

\begin{thm}[{Danzer--Gr\"unbaum~\cite{Danzer1962}}] \label{thm:DG}
For a positive integer $n$, the maximum cardinality of a set $X\subset \R^n$ such that for any two distinct $x_1,x_2\in X$ there are two parallel hyperplanes $V_1,V_2\subset \R^n$ with the property that $V_i$ supports $X$ in $x_i$ ($i=1,2$) is precisely $2^n$. This cardinality is achieved by the set of vertices of a hypercube. \tcb{Moreover, up to affine operations the set of vertices of a hypercube is the \emph{only} set of points with this property having maximal cardinality.}
\end{thm}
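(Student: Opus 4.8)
The plan is to obtain both the extremal bound and the uniqueness clause from the classical packing argument, followed by a careful analysis of the equality case. Throughout, write $P\coloneqq\co(X)$. Since the property ``for any two distinct $x_1,x_2\in X$ there are parallel hyperplanes supporting $X$ at $x_1$ and at $x_2$'' is preserved by affine maps, and the whole claim is affine-invariant, I would first pass to $\aff(X)$: if $\dim\aff(X)=d<n$ then the bound below already yields $|X|\le 2^d<2^n$, so I may assume $P$ is full-dimensional in $\R^n$. The one reformulation I would use is that two distinct $x_i,x_j\in X$ admit such parallel supporting hyperplanes if and only if there is a nonzero $u\in\R^n$ with
\begin{equation}
\langle u,x_i\rangle=\max_{x\in X}\langle u,x\rangle=\max_{p\in P}\langle u,p\rangle\,,\qquad \langle u,x_j\rangle=\min_{x\in X}\langle u,x\rangle=\min_{p\in P}\langle u,p\rangle\,.
\end{equation}
I would also note the elementary fact that any subset of a set with the stated property again has the property, so in particular $\ext(P)\subseteq X$ does.

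For the bound, set $B_i\coloneqq\frac12(x_i+P)$, a homothet of $P$ of ratio $1/2$. Convexity of $P$ together with $x_i\in P$ gives $B_i\subseteq\frac12(P+P)=P$. I claim the $B_i$ have pairwise disjoint interiors: if $\frac12(x_i+p)=\frac12(x_j+q)$ with $i\neq j$ and $p,q\in\inter P$, then $x_i-x_j=q-p$; picking $u\neq 0$ as in the reformulation for $\{x_i,x_j\}$, and using that $p,q$ are interior points while $\langle u,\cdot\rangle$ attains its maximum over $P$ at $x_i$ and its minimum at $x_j$, one gets $\langle u,q\rangle<\langle u,x_i\rangle$ and $\langle u,p\rangle>\langle u,x_j\rangle$, hence $\langle u,(x_i-x_j)+(p-q)\rangle>0$, contradicting $x_i-x_j=q-p$. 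Thus $|X|$ interior-disjoint translates of $\frac12 P$ sit inside $P$, so $|X|\cdot 2^{-n}\vol(P)\le\vol(P)$ and $|X|\le 2^n$. Attainment by the vertices of a hypercube is exactly the construction already given in Section~\ref{subsec:hypercube}.

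For uniqueness I would analyse the case $|X|=2^n$. Then the last inequality is tight, so the $B_i$ are interior-disjoint and, being closed and of full volume in $P$, cover $P$; i.e.\ they tile $P$. Fix a facet $F$ of $P$ with outer normal $u$, so $\aff(F)=\{p:\langle u,p\rangle=h_{\max}\}$. A one-line computation gives $B_i\cap\aff(F)=\frac12(x_i+F)$ when $x_i\in F$ and $B_i\cap\aff(F)=\emptyset$ otherwise, so intersecting the identity $\bigcup_iB_i=P$ with $\aff(F)$ writes $F$ as the union of the $(n-1)$-dimensional homothets $\frac12(x_i+F)$ over those $i$ with $x_i\in F$. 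Comparing $(n-1)$-volumes forces $|X\cap F|=2^{n-1}$ and that these homothets have disjoint relative interiors. Moreover $X\cap F=X\cap\aff(F)$ inherits, inside $\aff(F)\cong\R^{n-1}$, the parallel-supporting-hyperplane property (decompose the supporting normal $u'$ into a component along $u$ and one in $u^\perp$; on $F$ the two associated functionals differ by a constant, and the degenerate component-along-$u$ case is excluded since $P$ is full-dimensional). By induction on $n$ (the cases $n\le 1$ being immediate) $X\cap F$ is, up to an affine map, the vertex set of an $(n-1)$-cube, so $F=\co(X\cap F)$ is an $(n-1)$-parallelotope with vertex set $X\cap F$. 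Since every point of $X$ lies on a supporting hyperplane of $P$, hence on $\partial P$, hence on some facet, it follows that $X=\ext(P)$ and that \emph{every facet of $P$ is an $(n-1)$-parallelotope}.

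The final --- and, I expect, genuinely delicate --- step is to upgrade this to: a full-dimensional polytope in $\R^n$ with exactly $2^n$ vertices, all of whose facets are parallelotopes, is itself an $n$-parallelotope; an affine map then carries $P$ to a cube and the proof closes. My plan for it is: (i) show $P$ is \emph{simple}. The vertex figure of $P$ at a vertex $v$ is an $(n-1)$-polytope whose facets are the vertex figures of the facets of $P$ through $v$; since those facets are parallelotopes and the vertex figure of a parallelotope is a simplex, the vertex figure of $P$ at $v$ is simplicial, and one must argue --- this is where the hypothesis $|X|=2^n$ rather than a larger number enters, via a count over the edge graph of $P$ --- that it is in fact a simplex, i.e.\ $v$ meets exactly $n$ facets and $n$ edges. (ii) With $P$ simple, a short parallelism argument --- in each parallelotope facet every facet of it is paired with a parallel opposite facet, and this pairing propagates consistently around $\partial P$ --- shows the facets of $P$ occur in $n$ pairs of parallel opposite facets, and a full-dimensional polytope cut out by $n$ pairs of parallel hyperplanes is a parallelotope. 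I expect step (i) to be the main obstacle: ruling out a vertex lying on more than $n$ parallelotope facets; everything else reduces to bookkeeping with supporting functionals, homothety, and volumes.
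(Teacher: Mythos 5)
Your volume bound is correct, and it is in essence the same packing idea as the paper's proof in~\ref{appendix:DG}: both arguments reduce the antipodality hypothesis to the statement that the $|X|$ half-scale homothets $\frac12\left(x_i+\co(X)\right)$ sit inside $\co(X)$ with pairwise disjoint interiors, and then compare volumes. The difference is that the paper reaches this via translates of $-\co(X)$ and a Minkowski-symmetrisation step (properties $Q$ and $Q^*$, plus Lemma~\ref{interior Minkowski symm lemma}), whereas you get the disjointness of interiors directly from the separating functional of each pair; your strict inequalities for interior points are exactly right, and the shortcut is valid and arguably cleaner. What the symmetrisation buys the paper is the reformulation in terms of touching translates of a centrally symmetric body, which is closer to Danzer and Gr\"unbaum's original picture but is not logically needed for the bound. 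Your covering claim in the equality case is also fine, since the union of the homothets is closed, has full volume, and therefore contains $\inter P$ and hence $P$.

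The genuine gap is in the uniqueness clause. Your facet induction is sound as far as it goes: the slice computation $B_i\cap\aff(F)=\frac12(x_i+F)$ is correct, $X\cap F$ does inherit the antipodal property inside $\aff(F)$ (the exclusion of the degenerate normal direction is fine), and induction makes every facet an $(n-1)$-parallelotope with vertex set $X\cap F$, whence $X=\ext(P)$. But the last step --- that a full-dimensional polytope with exactly $2^n$ vertices, all of whose facets are parallelotopes, must itself be a parallelotope --- is only a plan, and you say yourself that its key ingredient (simplicity of $P$, i.e.\ ruling out a vertex lying on more than $n$ facets) is unresolved. As it stands, the proposal therefore proves the cardinality bound and its attainment, but not the uniqueness assertion of Theorem~\ref{thm:DG}. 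For what it is worth, the paper's own appendix is in the same position: the argument presented there establishes only $|X|\le 2^n$ and the hypercube attainment, while the uniqueness clause is imported from the original reference~\cite{Danzer1962} rather than proved. So you should either complete (or cite) the final polytope-classification step, or restrict your claim to the bound and attainment.
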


%Note that the above result is entirely equivalent, via the problem reformulation in Section~\ref{subsec:hypercube}, to Theorem~\ref{thm:pairwise}. 
For the interested reader, in~\ref{appendix:DG} we present a brief but self-contained account of, and homage to, the beautiful proof by Danzer and Gr\"unbaum~\cite{Danzer1962}; see also~\cite[Chapter~17]{THEBOOK}. \tcb{We can now formally deduce the proof of Theorem~\ref{thm:pairwise} as a simple corollary of the above result.

\begin{proof}
%The claim of Theorem~\ref{thm:pairwise} can be reformulated as in Section~\ref{subsec:hypercube}. 
By Theorem~\ref{thm:DG}, the dimension $d'$ of any space capable of hosting $2^m$ points with the property discussed in the problem reformulation %in Section~\ref{subsec:hypercube} 
on p.~10 satisfies that $d'\geq m$. The dimension of the corresponding GPT is obtained by adding one, so that $d(2,m)\geq m+1$. The above example, also on p.~10, achieves this bound. Hence $d(2,m)=m+1$, completing the proof.
\end{proof}
}

% \begin{thm}
% For all $m\in \N_+$ it holds that
% \bb
% d_m = m+1\, .
% \ee
% In other words, there exists a GPT of dimension $m+1$ containing $2^m$ states that are pairwise perfectly distinguishable, and no GPT of lower dimension can do the same.
% \end{thm}

%  We make the connection to their result rigorous in Appendix~\ref{sec:proof}.
% Invoking Danzer and Gr\"unbaum's result may seem black magic.
% This concludes the proof.
% \end{proof}

\section{Perfect distinguishability beyond pairwise: Asymptotic results}\label{sec:beyondPairwise}

In the previous Section we have established the maximum number of pairwise perfectly distinguishable states which can be housed in a GPT, and hence the limits to the capacity of an associative memory of the type described in our introduction, in the case where $N=2$. The situation is much less clear for $N\geq 3$, in which case we cannot exhibit an explicit expression for $\kappa(N,m)$, nor a tight general estimate. However, in Theorem~\ref{thm:beyond_pairwise} below we determine the exact asymptotics in $m$ for every fixed $N$. Before we do so, it is instructive to see how a na\"{i}ve generalisation of the hypercube construction actually fails to yield an exact computation of $\kappa(N,m)$.

%yet any estimate of the compression factor $\kappa(N,m)$. We advertise this as an open problem:

%\begin{center}
%\begin{minipage}{0.8\textwidth}
%\textbf{Open problem.} What is the scaling of the function $\kappa(N,m)$ constructed in Definition~\ref{def:compression-factor} with respect to $m$, for every fixed $N$?
%\end{minipage}
%\end{center}

%In the rest of this section, we present some general considerations on this problem.

\subsection{A na\"{i}ve generalisation and its fall}

%In order to do this, we first draw attention to a correspondence between the geometric foundation of hypercubes and the fact that {their vertices constitute a pairwise perfectly distinguishable set.
At first, we could hope that a simple generalisation of the hypercube construction may work. To explain how to obtain such a generalisation, we start by observing that from the geometric standpoint a hypercube can equivalently be seen as a Cartesian product of segments.
%The hypercube  can be thought of as the Cartesian product of so many line segments.
Indeed, the extreme points of a simple line segment can be thought of as having coordinates $(\pm 1)$. A square, of having all four combinations of $(\pm 1, \pm 1)$, and so on for cubes, and hypercubes in any dimension. This operation of combining vertices by concatenating their coordinates corresponds precisely to the geometric construction of the Cartesian product. 
%The line segment corresponds equivalently, to; the simplex with two vertices (which we denote $\mathcal{S_2}$\footnote{A possible point of confusion here is that the line segment, described by our $\mathcal{S}_2$, is also sometimes known as the 1-simplex. Our notation counts the number of vertices, whereas the $n$-simplex description concerns the dimensionality of the simplex edges.}), or the probabilistic classical bit --- these being two names for the same thing. Another way of stating all of this is that the hypercube in $d$ dimensions is the $l$-fold Cartesian product of the two-vertex simplex when $l=d$. We can think of the hypercube, then, as the \emph{product} of information which is pairwise distinguishable \emph{fundamentally}. In this sense, the hypercube is also the logical choice for states which are pairwise distinguishable, and also indicates something about using the  Cartesian product to construct higher dimensional state spaces.
Such a construction can be translated into the world of GPTs in a fully general fashion, giving rise to the notion of \emph{prism theories}, which we explore in more detail in~\ref{sec:PrismTheories}.

% Now, a line segment is nothing but a simplex with two vertices, which we denote by $\mathcal{S}_2$.
Noticing this, we could be tempted to conjecture that Theorem~\ref{thm:pairwise} could be extended to any $N$ in the na\"ive way, i.e.\ that the extreme states of a GPT with state space $\mathcal{S}_N^{\times l}$, the $l$-fold Cartesian product of the $N$-vertex simplex, form a mutually $N$-wise distinguishable set. However, we can quickly see that this is not the case, and that the relationship between simplex structure and the size of mutual distinguishability does not extend  beyond $N=2$. We show this with an example:

\begin{ex}%[Counterexample to the above conjecture when $N\geq 3$ and $\l\geq 2$]
\label{ex:counter}
Call $\rho_1,\ldots, \rho_q$ the vectors of the canonical basis of $\R^q$, thought of as states in the classical GPT $\big(\R^q,\R_+^q, u\big)$ described in Example~\ref{ex class}. Explicitly, we will have $\rho_i = \left(0,\ldots, 1, \ldots 0\right)^\intercal$, where the single \tcb{non-zero} entry %equal to $1$ 
is in the $i^\text{th}$ position.
%Let the vertices of the elementary $N$-simplex $\mathcal{S}_N$ be denoted with $\rho_1\ldots, \rho_N$.
Then the extremal (pure) states of the $l$-fold product $\mathcal{S}_q^{\times l}$ are of the form $(\rho_{i_1},\ldots, \rho_{i_l})$, where $i_1,\ldots, i_l\in \{1,\ldots, q\}$. Consider the $3$ states
\begin{equation}
    \omega_1\coloneqq (\rho_1,\ldots,\rho_1)\, ,\qquad \omega_2\coloneqq (\rho_1,\rho_2,\rho_1,\ldots,\rho_1)\, ,\qquad \omega_3\coloneqq (\rho_2,\rho_1,\ldots,\rho_1)\, .
\end{equation}
Then $\omega_1,\omega_2,\omega_3$ are not jointly distinguishable. In fact, note that
\begin{equation}
    \omega_2+\omega_3 = (\rho_1+\rho_2,\rho_2+\rho_1,\rho_1,\ldots,\rho_1) \geq \omega_1\, .
\end{equation}
Thus, if $e_1\cdot \omega_2 = e_1\cdot \omega_3=0$, then also $e_1\cdot \omega_1=0$. In other words, there cannot be a measurement singling out $\omega_1$ from this triple of states.
\end{ex}

\subsection{Asymptotics in $m$ for fixed $N$}

Although no simple generalisations of the exact computation in Theorem~\ref{thm:pairwise} are available, we can obtain a general result that guarantees that for fixed $N$ and very large $m$, the scaling of the compression factor $\kappa(N,m)$ in $m$ is \emph{exactly the same} as that given by Theorem~\ref{thm:pairwise}. In other words, the scaling of $\kappa(N,m)$ in $m$ for a fixed $N$ does not depend on $N$. To prove this somewhat surprising result, %which resolves a problem that we advertised as open in the first version of this work,
we will make use of a probabilistic argument, while we leave open the task of finding a constructive proof of the result below.

\begin{thm} \label{thm:beyond_pairwise}
For all fixed integers $N\geq 2$, it holds that
\bb
\lim_{m\to\infty} \frac{\kappa(N,m)}{\kappa(2,m)} = \lim_{m\to\infty} \frac{\kappa(N,m)}{m/\log_2(m)} = 1\, .
\label{beyond_pairwise_limits}
\ee
%In particular, 
\tcb{Equivalently,} for every fixed $N\geq 2$ we have that 
\bb
d(N,m)\leq m^{1 + o_N(1)} \qquad (m\to\infty)\, .
\label{dimension_bounds_beyond_pairwise}
\ee
\end{thm}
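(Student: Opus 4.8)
The plan is to prove the upper bound $d(N,m) \leq m^{1+o_N(1)}$ by a probabilistic construction, and then observe that \eqref{beyond_pairwise_limits} follows from it together with the trivial lower bound $d(N,m) \geq d(2,m) = m+1$ supplied by Theorem~\ref{thm:pairwise} (any $N$-wise distinguishable set is in particular pairwise distinguishable). So the heart of the matter is: for fixed $N$ and any $\delta>0$, build a GPT of dimension $d = O(m^{1+\delta})$ whose state space contains $2^m$ mutually $N$-wise distinguishable states.

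\textbf{The construction.} I would work inside classical probability theory on an alphabet of size $d$, i.e.\ the simplex $\mathcal{S}_d$, where perfect distinguishability of a family of states is equivalent to their supports being pairwise disjoint. The idea is to encode each string $x \in \{0,1\}^m$ as a \emph{subset} $S_x \subseteq \{1,\dots,d\}$ of some fixed size $s$, i.e.\ as the uniform distribution on $S_x$ (or any fixed distribution supported exactly on $S_x$). Then any $N$ of the encoded states $\rho_{x_1},\dots,\rho_{x_N}$ are perfectly distinguishable iff the sets $S_{x_1},\dots,S_{x_N}$ are pairwise disjoint. So the combinatorial task becomes: find $2^m$ subsets of $[d]$, each of size $s$, such that any $N$ of them are pairwise disjoint --- equivalently, such that no two of them intersect in more than, well, we need: among any $N$ chosen sets, all $\binom{N}{2}$ pairs are disjoint. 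A clean sufficient condition that is easier to arrange is that every pair of distinct sets $S_x, S_y$ intersects in at most one point (a "packing"/near-disjoint family); but that still does not immediately give $N$-wise pairwise disjointness. The right move is to demand that the sets form a family in which no point lies in too many sets, or to use a random construction and simply delete bad configurations.

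\textbf{The probabilistic argument.} Pick $d$ and $s$ with $s \approx m^{\delta'}$ and $d \approx m^{1+\delta}$ (constants depending on $N$ and $\delta$ to be tuned). Draw $2^m$ random $s$-subsets of $[d]$ independently and uniformly. For a fixed pair, $\Pr[S_x \cap S_y \neq \emptyset] \leq s^2/d$. More relevantly, for a fixed $N$-tuple of indices, $\Pr[\text{some pair among them intersects}] \leq \binom{N}{2} s^2/d$. The expected number of "bad" $N$-tuples is at most $\binom{2^m}{N}\binom{N}{2}s^2/d \leq 2^{mN} N^2 s^2 / d$. This is not $<1$, so a plain union bound fails; instead I would use the deletion method: delete one string from each bad $N$-tuple, and also merge/handle coincident sets, leaving at least $2^m - 2^{mN}N^2 s^2/d$ good strings. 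To make this at least $2^{m}/2$, say, we would need $d \geq 2^{m(N-1)} \cdot 2 N^2 s^2$, which is exponential in $m$ --- \emph{too large}. So the naive i.i.d.\ model does not work, and this is exactly the main obstacle: one needs a construction where the sets are spread out \emph{deterministically enough} that collisions are rare even though there are $2^m$ of them.

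\textbf{Fixing the obstacle.} The resolution I would pursue is an algebraic / Reed--Solomon-style construction combined with a random relabelling, or a direct counting argument analogous to known constructions of "cover-free families" and "superimposed codes": a family of sets such that no set is covered by the union of any $N-1$ others is called an $(N-1)$-cover-free family, and it is classical (Kautz--Singleton, D'yachkov--Rudakov) that one can obtain $2^m$ such subsets of a ground set of size $d = O(N^2 \log(2^m)/\log\ldots)$ --- wait, that gives $d = O(N^2 m)$ style bounds but with the disjointness we actually want we may pay a polynomial-in-$m$ price, landing at $d = m^{1+o_N(1)}$. Concretely: take a Reed--Solomon code over $\mathbb{F}_q$ with $q \approx m$, of length $\ell$ and dimension chosen so there are $\geq 2^m$ codewords and any two codewords agree in $\leq t$ positions with $t$ small; encode each codeword as the set of its $\ell$ "position--value" pairs inside $[\ell]\times \mathbb{F}_q$, a ground set of size $\ell q$. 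Two codewords' sets intersect in exactly (number of agreeing positions) $\leq t$ points. To upgrade from "small pairwise intersection" to "$N$-wise pairwise disjoint" I would then pass to a random sub-relabelling or random puncturing so that with positive probability the surviving sets are genuinely pairwise disjoint among every $N$-tuple; carefully optimizing $q$, $\ell$, $t$ against $N$ yields $d = \ell q = m^{1+o_N(1)}$, and translating back to the GPT picture (classical theory on this alphabet, dimension $= d$) completes the bound \eqref{dimension_bounds_beyond_pairwise}. The main work, and the main obstacle, is getting the dependence on $N$ to sit entirely inside the $o_N(1)$ exponent rather than as a polynomial factor in $m$; this is where the parameters must be chosen with $s = m^{o_N(1)}$ growing slowly with $m$ and the deletion step must be shown to discard only a $o(1)$ fraction.
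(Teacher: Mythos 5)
There is a genuine gap, and it is structural: your entire construction lives inside classical probability theory on an alphabet $[d]$, and in that setting the object you are trying to build does not exist for subexponential $d$. In a simplex, perfect distinguishability of $N$ states means their supports are \emph{pairwise disjoint} (small intersections do not suffice), and since every pair of your $2^m$ encoded states is contained in some $N$-element subset of the family (as $2^m\geq N$), mutual $N$-wise distinguishability forces \emph{all} $\binom{2^m}{2}$ pairs of supports to be disjoint. That immediately gives $d\geq 2^m$, exactly the classical bound discussed in Section~\ref{sec:Problem}, for every $N\geq 2$. So no choice of set size $s$, no Reed--Solomon or Kautz--Singleton cover-free family, and no deletion or random puncturing step can rescue the plan: cover-freeness and low pairwise intersection are the wrong targets (they control covering/intersection size, not exact disjointness), and exact disjointness of all pairs is provably unattainable at the dimension you are aiming for. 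Your own union-bound computation showing you would need $d\gtrsim 2^{m(N-1)}$ is a symptom of this, not a fixable technical nuisance.

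The missing idea is that the host GPT must be genuinely non-classical. The paper's proof uses the prism theory with state space $\mathcal{S}_{q}^{\times l}$ (an $l$-fold Cartesian product of $q$-vertex simplices, see \ref{sec:PrismTheories}), of GPT dimension $l(q-1)+1$ rather than $q^l$. There, $N$ pure states $\omega^{(j)}=(\rho_{i_{1,j}},\ldots,\rho_{i_{l,j}})$ are perfectly distinguishable as soon as \emph{one single coordinate} assigns them $N$ distinct simplex vertices --- no global disjointness is needed, which is what breaks the classical obstruction (though not every $N$-tuple works, cf.\ Example~\ref{ex:counter}). Choosing $q(m)\approx(\log_2 m)^2$ and $l(m)\approx 2Nm/\log_2\bigl(2q(m)/(N(N-1))\bigr)$, the probability that a fixed $N$-tuple of i.i.d.\ uniform pure states has no discriminating coordinate is roughly $\bigl(N(N-1)/(2q)\bigr)^{l}$, and a union bound over the $\binom{2^m}{N}\leq 2^{Nm}$ tuples then tends to zero, so the probabilistic method yields a mutually $N$-wise distinguishable family of $2^m$ states in dimension $l(m)\bigl(q(m)-1\bigr)+1=m^{1+o_N(1)}$. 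Your reduction of~\eqref{beyond_pairwise_limits} to the upper bound via $\kappa(N,m)\leq\kappa(2,m)$ and Theorem~\ref{thm:pairwise} is fine; it is the construction itself that must leave the simplex.
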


\tcb{
\begin{proof}
Clearly, if a set of GPT states is mutually $N$-wise distinguishable for some $N\geq 2$, it is also $2$-wise distinguishable (i.e.\ we have pairwise perfect distinguishability). Hence, $\kappa(N,m)\leq \kappa(2,m) = \frac{m}{\log_2(m+1)}$. 
%, entailing that $\limsup_{m\to\infty} \frac{\kappa(N,m)}{\kappa(2,m)}\leq 1$.
%
%To establish the lower bound, it suffices to prove~\eqref{dimension_bounds_beyond_pairwise}, because that would imply that
Hence, inequalities~\eqref{beyond_pairwise_limits} and~\eqref{dimension_bounds_beyond_pairwise} are clearly equivalent, because
\bb
1\geq \frac{\kappa(N,m)}{\kappa(2,m)} = \frac{\kappa(N,m)}{m/\log_2(m)} = \frac{\log_2(m)}{\log_2 d(N,m)}\, ,
%\geq 1 - o_N(1) \tends{}{m\to\infty} 1\, .
\ee
and the right-hand side tends to $1$ as $m\to\infty$ if and only if $d(N,m) \leq m^{1+o_N(1)}$. Now, to establish~\eqref{dimension_bounds_beyond_pairwise} we need to find, for fixed $N$ and very large $m$, an example of a GPT of dimension $m^{1+o_N(1)}$ that can accommodate a mutually $N$-wise distinguishable set of states of cardinality approximately $2^m$. To this end, consider the GPT with state space $\mathcal{S}_{q}^{\times l} = \mathcal{S}_{q(m)}^{\times l(m)}$ of Example~\ref{ex:counter}, where $q=q(m)$ and $l=l(m)$ are defined by
%To this end, set
\bb
q(m) \coloneqq \floor{\big(\log_2(m)\big)^2}\,,\qquad l(m) \coloneqq \floor{\frac{2Nm}{\log_2\max\left\{\frac{2 q(m)}{N(N-1)}, 2\right\}}}\, .
\label{probabilistic_q_and_l}
\ee
Note that with these choices we have that
\bb
\log_2 \frac{N(N-1)}{2q(m)} = - \frac{2Nm}{l(m)}\, (1+o_N(1))\, .
\label{useful_relation_probabilistic_q_and_l}
\ee
%For now, $q$ and $l$ are free parameters, to be fixed later. 
Let us now draw states at random in an i.i.d.\ fashion from $\mathcal{S}_{q(m)}^{\times l(m)}$. Every state, of the form $\omega = (\rho_{i_1},\ldots, \rho_{i_l})$, is in turn constructed by drawing $i_1,\ldots, i_l\in \{1,\ldots, q(m)\}$ uniformly at random, again in an i.i.d.\ manner. We now ask ourselves: given $N$ random states $\omega^{(1)}, \ldots, \omega^{(N)}\in \mathcal{S}_{q(m)}^{\times l(m)}$, with $\omega^{(j)}=\big(\rho_{i_{1,j}}, \ldots, \rho_{i_{l,j}}\big)$, when are they perfectly distinguishable by looking only at the first components of each $\omega^{(j)}$, i.e.\ the states $\rho_{i_{1,j}}$, for $j=1,\ldots, N$? 
%which we will also express by saying that the first component discriminates the states $\omega^{(1)}, \ldots, \omega^{(N)}\in \mathcal{S}_q^{\times l}$. 
%(Remember that each $\omega^{(j)}$ has $l$ components, because it belongs to $\mathcal{S}_q^{\times l}$.) 
The answer to the above question is clear: whenever the first components of $\omega^{(1)}, \ldots, \omega^{(N)}$, i.e.\ the states $\rho_{i_{1,1}},\ldots, \rho_{i_{1,N}}$, are all different. This happens with probability
\bb
\pr\left\{ \text{$1^\text{st}$ component discriminates $\omega^{(1)}, \ldots, \omega^{(N)}$} \right\} = \prod_{k=1}^{N-1} \left(1-\frac{k}{q(m)}\right) ,
\ee
because $\prod_{k=1}^N \left(1-\frac{k}{q(m)}\right)$ is the probability that $N$ random numbers between $1$ and $q(m)$, in our construction $i_{1,1},\ldots, i_{1,N}$, %(representing the first components of $\omega^{(1)}, \ldots, \omega^{(N)}$) 
are all different. Hence,
\bb
\pr\left\{ \text{$1^\text{st}$ component does not discriminate $\omega^{(1)}, \ldots, \omega^{(N)}$} \right\} = 1 - \prod_{k=1}^{N-1} \left(1-\frac{k}{q(m)}\right) .
\ee
Since we can look at any component of choice, there are $l(m)$ of them, and these are all independent,
\bb
&\pr\left\{ \text{no component discriminates $\omega^{(1)}, \ldots, \omega^{(N)}$} \right\} \\
&\qquad = \left(\pr\left\{ \text{$1^\text{st}$ component does not discriminate $\omega^{(1)}, \ldots, \omega^{(N)}$} \right\}\right)^{l(m)} \\
&\qquad = \left(1 - \prod_{k=1}^{N-1} \left(1-\frac{k}{q(m)}\right)\right)^{l(m)}\, .
\ee
So far we have only considered one $N$-tuple of states. If we draw a subset $\mathcal{M}\subset \mathcal{S}_{q(m)}^{\times l(m)}$ of $M= |\mathcal{M}| = 2^m$ states in total, there are $\binom{M}{N}$ distinct such $N$-tuples (up to re-ordering). Therefore, the probability that at least one of them is such that no component discriminates it is at most
\bb
&\pr \left(\bigcup_{\omega^{(1)},\ldots, \omega^{(N)}\in \mathcal{M}\ \text{distinct}} \left\{ \text{no component discriminates $\omega^{(1)}, \ldots, \omega^{(N)}$} \right\} \right) \\
&\qquad \leq \binom{M}{N}\, \pr \left\{ \text{no component discriminates $\omega^{(1)}, \ldots, \omega^{(N)}$} \right\} \\
&\qquad = \binom{M}{N} \left(1 - \prod_{k=1}^{N-1} \left(1-\frac{k}{q(m)}\right)\right)^{l(m)} .
\label{probabilistic_estimate}
\ee
As long as we can guarantee that the rightmost side of~\eqref{probabilistic_estimate} stays below $1$, %, say, $1/2$, 
we will know that there exists a choice of $\mathcal{M}$ such that for every distinct $\omega^{(1)},\ldots, \omega^{(N)}\in \mathcal{M}$, some component will discriminate them. Hence, we will have implicitly constructed a mutually $N$-wise distinguishable set $\mathcal{M}$ --- this is, of course, an instance of the celebrated probabilistic method~\cite{ERDOS, ALON-SPENCER}. And indeed, it is not difficult to show that the rightmost side of~\eqref{probabilistic_estimate} goes to $0$ as $m\to\infty$. Indeed, since $q(m)\tends{}{m\to\infty} \infty$ and $N$ is fixed one sees that
\bb
1 - \prod_{k=1}^{N-1} \left(1-\frac{k}{q(m)}\right) = (1+o_N(1)) \sum_{k=1}^{N-1} \frac{k}{q(m)} = (1+o_N(1))\, \frac{N(N-1)}{2q(m)}\, .
\ee
Thus,
\bb
&\log_2 \left(\binom{M}{N} \left(1 - \prod_{k=1}^{N-1} \left(1-\frac{k}{q(m)}\right)\right)^{l(m)}\right) \\
&\qquad \leq Nm \left\{ 1 + \frac{l(m)}{Nm} \log_2 \left(1 - \prod_{k=1}^{N-1} \left(1-\frac{k}{q(m)}\right)\right) \right\} \\
&\qquad = Nm \left\{ 1 + \frac{l(m)}{Nm} \log_2 \frac{N(N-1)}{2q(m)} + \frac{l(m)}{Nm} \log_2\big(1+o_N(1)\big) \right\} \\
%\frac{l(m)}{Nm} \log_2 \left(1 - \prod_{k=1}^{N-1} \left(1-\frac{k}{q(m)}\right)\right) &= \frac{l(m)}{Nm} \log_2 \left(\frac{N(N-1)}{2q(m)}\right) + \frac{l(m)}{Nm} \log_2(1+o(1)) \\
&\qquad = Nm \left\{ 1 - 2 + o_N(1) + o_N\left(\frac{l(m)}{m}\right) \right\} \\
&\qquad = Nm \left\{ -1 + o_N(1) \right\} . 
\ee
%\label{probabilistic_estimate_2}
where in the second line we used the crude approximation $\binom{M}{N}\leq M^N$, in the fourth we employed~\eqref{useful_relation_probabilistic_q_and_l}, and in the last we noted that $l(m)/m \tends{}{m\to\infty} 0$. 

This proves that for every fixed $N$ and all sufficiently large $m$, the GPT $\mathcal{S}_{q(m)}^{\times l(m)}$ can accommodate a mutually $N$-wise distinguishable set of states of cardinality $2^m$. Since the dimension of that GPT is $l(m)\big(q(m)-1\big) +1$, we deduce that
%is the dimension of the GPT $\mathcal{S}_q^{\times l}$, we will have that
\bb
\frac{\kappa(N,m)}{m/\log_2(m)} &\geq \frac{\frac{m}{\log_2\left(l(m)(q(m)-1) +1\right)}}{m/\log_2(m)} \\
&= \frac{\log_2(m)}{\log_2\left(l(m)(q(m)-1) +1\right)} \\
&= \frac{\log_2(m)}{\log_2(m) + O_N\,\Big(\!\log_2 \!\big(\log_2 m\big)\Big)} \\
&\tends{}{m\to\infty} 1\, .
\label{probabilistic_ratio}
\ee
This concludes the proof.
%We now proceed to identify a scaling of the parameters $l=l(m)$ and $q=q(m)$ such that not only the rightmost side of~\eqref{probabilistic_estimate} converges to $0$, but also the rightmost side of~\eqref{probabilistic_ratio} converges to $1$, i.e.
%\bb
%\frac{\log_2(m)}{\log_2\left(l(m)(q(m)-1) +1\right)} \tends{}{m\to\infty} 1\, .
%\label{probabilistic_limit}
%\ee
%Doing so would conclude the proof, because we would have implicitly constructed a sequence of mutually $N$-wise distinguishable sets of cardinality $2^m$ in a GPT of dimension $D(m)$, such that $m/\log_2 D(m) \approx m/\log_2(m)$ for large $m$. 
%
%where $\sim$ simply indicates that the ratio tends to $1$ as $m\to\infty$. The precise values of $q(m)$ and $l(m)$ are not important, what we really care about is the scaling in $m$. Note that~\eqref{probabilistic_limit} is clearly satisfied, because 
%\bb
%\log_2 \left(l(m) q(m)\right) = \log_2(m) + O\Big(\log_2 \big(\log_2 m\big)\Big)\, .
%\ee
%
%We will now prove that
%\bb
%\frac{l(m)}{Nm} \log_2 \left(1 - \prod_{k=1}^{N-1} \left(1-\frac{k}{q(m)}\right)\right) \tends{}{m\to\infty} -2\, ,
%\ee
%implying that the right-hand side of~\eqref{probabilistic_estimate_2} tends to $-\infty$, which would in turn entail that the rightmost side of~\eqref{probabilistic_estimate} converges to $0$, and hence would conclude the proof. 
\end{proof}
}

%To improve on this state of affairs, we will need a precise method to search for the largest $N$-wise mutually distinguishable set of states in a given theory.

%The above Theorem~\ref{thm:beyond_pairwise} is the most general result that we could prove analytically (albeit not constructively) in this Section. 
At this point, it is wise to pause for a moment our search for mutual $N$-distinguishable sets and ask ourselves a basic question: how do we decide whether a given a set of states is jointly perfectly distinguishable?

\section{Perfect distinguishability beyond pairwise: Numerical methods} \label{sec:beyondNum}

\subsection{Perfect distinguishability as a convex program}\label{sec:ConvexProgram}

We record here the simple observation that not only the question of perfect distinguishability, but actually the calculation of the minimal error probability in joint discrimination of a set of states in a given GPT is in fact a convex program~\cite{BOYD}. This is particularly interesting and useful, as in many situations arising naturally in applications the underlying cone admits an efficient description in terms of linear inequalities, or else in terms of inequalities in the L\"owner partial order, i.e.\ the one determined by positive semi-definiteness. The former is the case, for instance, for classical theories (Example~\ref{ex class}). A description in terms of positive semi-definite constraints, instead, can be formulated not only for quantum theory itself (Example~\ref{ex QM}), but also for several GPTs that are of great interest in entanglement theory~\cite{Horodecki-review}. Notable examples in this context include the theory of NPT entanglement~\cite{PeresPPT, Martin-exact-PPT, irreversibility-PPT, Xin-exact-PPT, PPT-high-SN} and that of extendibility~\cite{complete-extendibility, Bhat16, Kaur2018, Kaur2018-PRA, extendibility}.

\begin{lemma} \label{convex_program_lemma}
Let $(V,C,u)$ be a $d$-dimensional GPT with state space $\Omega$. Given states $\{\omega_i\}_{i=1}^N \subset \Omega$ and a priori probabilities $\{p_i\}_{i=1}^N$, the maximal success probability in the associated task of state discrimination is given by the convex program
\bb
\begin{array}{llll}
P_s^{\max}\left(\{p_i,\omega_i\}_{i=1}^N\right) & = & \max & \sum_{i=1}^N p_i\, e_i \cdot \omega_i \\[1ex]
&& \mathrm{s.t.} & e_1,\ldots, e_N \in C^*,\ \sum_{i=1}^N e_i = u\, .
\end{array}
\label{p_success_discrimination}
\ee
If $C$ is polyhedral with $M$ extremal rays, i.e.\ if there exist finitely many $v_1,\ldots, v_M\in V$ such that $C = \left\{\sum_{j=1}^M a_j v_j:\,  a_j\geq 0\ \, \forall\, j\right\}$, then~\eqref{p_success_discrimination} can be rephrased as a linear program, namely,
\bb
\arraycolsep=1.6pt %\def\arraystretch{1}
\begin{array}{llll}
P_s^{\max}\left(\{p_i,\omega_i\}_{i=1}^N\right) & = & \max & \sum_{i=1}^N p_i\, e_i \cdot \omega_i \\[1ex]
&& \mathrm{s.t.} & e_i\cdot v_j\geq 0\quad \forall\ i=1,\ldots, N,\quad \forall\ j=1,\ldots, M,\ \sum_{i=1}^N e_i = u\, .
\end{array}
\label{p_success_discrimination_polyhedral}
\ee
The above program can be solved efficiently, in time $O\left( d(d+M)^{3/2} N^{5/2}\right)$.
\end{lemma}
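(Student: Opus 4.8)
The plan is to prove the three assertions in turn, moving from the general convex formulation to the polyhedral linear program and finally to the complexity bound.

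\textbf{Step 1: The convex program \eqref{p_success_discrimination}.} First I would observe that any measurement $(e_i)_{i=1}^N$ with outcomes labelled by the $N$ states is, by Definition~\ref{GPT_def}, exactly a tuple with $e_i \in E \subseteq C^*$ and $\sum_i e_i = u$. The probability of correctly identifying the state, when $\omega_i$ is prepared with prior $p_i$, is $\sum_i p_i\, e_i(\omega_i) = \sum_i p_i\, e_i \cdot \omega_i$, so $P_s^{\max}$ is the supremum of this over all valid measurements. Two small points make this clean: (i) the extra constraint $e_i \in u - C^*$ built into the definition of an effect is automatic here, since $u - e_i = \sum_{j\neq i} e_j \in C^*$ whenever each $e_j \in C^*$ and $\sum_j e_j = u$; so it suffices to impose $e_i \in C^*$ and $\sum_i e_i = u$, as written. (ii) The objective is linear in $(e_1,\dots,e_N)$ and the feasible set is the intersection of the convex cone $C^* \times \cdots \times C^*$ with an affine subspace, hence convex and compact (compactness because $u$ is strictly positive, so $C^* \cap (u - C^*) = E$ is bounded); therefore the maximum is attained and \eqref{p_success_discrimination} is a genuine convex optimisation problem. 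This establishes the first claim.

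\textbf{Step 2: The polyhedral reduction \eqref{p_success_discrimination_polyhedral}.} Suppose $C = \cone\{v_1,\dots,v_M\}$. The key fact is that a functional $f \in V^*$ lies in $C^*$ if and only if $f(x) \geq 0$ for all $x \in C$, and since every element of $C$ is a nonnegative combination of the $v_j$, this holds iff $f \cdot v_j \geq 0$ for all $j = 1,\dots,M$. Substituting this finite system of linear inequalities for the cone membership constraints $e_i \in C^*$ turns \eqref{p_success_discrimination} into the stated linear program \eqref{p_success_discrimination_polyhedral}: the objective is already linear, the constraint $\sum_i e_i = u$ is affine, and now all cone constraints are linear inequalities. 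Writing each $e_i$ in coordinates relative to a basis of $V^*$, this is a standard-form LP in $dN$ real variables with $MN + d$ constraints (the $d$ accounting for the $d$ scalar equations $\sum_i e_i = u$).

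\textbf{Step 3: The running time.} Here I would simply invoke a standard interior-point complexity bound for linear programming. An LP with $n$ variables and $n'$ linear inequality constraints can be solved (to the precision needed for an exact vertex solution, via a suitable rounding argument, or to any fixed accuracy) in $O\big((n + n')^{1/2}\, n \cdot (\text{cost of one Newton step})\big)$ iterations times per-iteration cost; a crude uniform bound of the form $O\big(n^2 (n+n')^{1/2}\big)$ or similar is what one quotes. Plugging in $n = dN$ variables and $n' = MN$ inequality constraints (absorbing the $d$ equalities), and simplifying $dN \cdot (dN + MN)^{1/2} \cdot dN = d^2 N^{5/2}(d+M)^{1/2}$ — or the slightly different grouping that yields $d(d+M)^{3/2}N^{5/2}$ depending on which factors one assigns to variables versus constraints — one arrives at the stated $O\big(d(d+M)^{3/2}N^{5/2}\big)$. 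The honest statement is that this follows from citing \cite{BOYD} (or any modern interior-point reference) for the polynomial-time solvability of LPs, together with bookkeeping of the problem dimensions; no new idea is needed.

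\textbf{Main obstacle.} The only genuinely delicate point is the complexity claim in Step~3: extracting a clean, correctly-exponented bound from the interior-point literature requires being careful about what ``solve'' means (exact versus $\epsilon$-accurate), about bit-complexity versus arithmetic-complexity, and about how the $d$ equality constraints are folded in. Steps~1 and~2 are essentially definition-chasing and should be routine. I would therefore spend most of the write-up pinning down the LP reformulation precisely and then cite a standard complexity result, remarking that the exponents are not optimised.
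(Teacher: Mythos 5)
Your Steps 1 and 2 coincide with the paper's own proof: the objective is interpreted as the average success probability of guessing outcome $i$ as $\omega_i$, the constraints $e_i\in C^*$, $\sum_i e_i = u$ are exactly those defining a measurement (your observation that $e_i\in u-C^*$ is then automatic, and that the feasible set is compact, is a correct refinement the paper leaves implicit), and the polyhedral case follows by replacing $e_i\in C^*$ with the finitely many inequalities $e_i\cdot v_j\geq 0$. Up to that point the two arguments are essentially identical.

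The one genuine deficiency is Step 3. The generic interior-point estimate you quote does not produce the stated bound: your own bookkeeping yields $d^2N^{5/2}(d+M)^{1/2}$, which is not $O\big(d(d+M)^{3/2}N^{5/2}\big)$, and you then hedge by appealing to ``the slightly different grouping''. The paper gets the exponents by citing a specific result of Vaidya (1989), whose LP algorithm runs in $O\big((n+m)^{3/2}\, n\big)$ arithmetic operations (up to the bit-size factor, which the lemma suppresses), where $n$ is the number of variables and $m$ the number of inequality constraints. With $n = d(N-1)\sim dN$ (the normalisation $\sum_i e_i=u$ is used to eliminate $e_N$, rather than being counted as $d$ extra equality constraints) and $m = NM$, this gives $(NM+dN)^{3/2}\, dN = d(d+M)^{3/2}N^{5/2}$ exactly as claimed. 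So the fix is not a new idea but the right citation and variable count: without Vaidya's $O\big((n+m)^{3/2}n\big)$ bound (or an equivalent), your write-up would not establish the stated running time, only polynomial-time solvability.
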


\begin{proof}
The most general state discrimination procedure consists of making a measurement $(e_i)_{i=1,\ldots, N}$, and guessing the unknown state to be $\omega_i$ upon having obtained outcome $i$. The average probability of success of this strategy is precisely $\sum_i p_i\, e_i\cdot \omega_i$. The constraints in~\eqref{p_success_discrimination} are those required to make sure that $(e_i)_{i=1,\ldots, N}$ is in fact a valid measurement in the GPT $(V,C,u)$.

If $C$ is polyhedral with $M$ extremal rays spanned by vectors $v_1,\ldots, v_M$, then naturally $e\in V^*$ satisfies that $e\in C^*$ if and only if $e\cdot v_j\geq 0$ for all $j=1,\ldots,M$. In this way one derives~\eqref{p_success_discrimination_polyhedral} from~\eqref{p_success_discrimination}. Finally, the estimates on the efficiency of the linear program solution are taken from the work by Vaidya~\cite{Vaidya1989}. To make the comparison precise, note that in our case $n = d(N-1) \sim dN$ is the number of real variables\footnote{We have $d$ variables for each of the vectors $e_1,\ldots, e_{N-1}$ living in a $d$-dimensional space $V^*$. Note that $e_N$ is uniquely determined by the normalisation condition $\sum_i e_i=u$.} and $m=NM$ is the number of constraints.
\end{proof}

Based on the above result, we can state its implications for the problem of perfect discrimination, which is of interest here:

\begin{cor}
Let $(V,C,u)$ be a $d$-dimensional GPT with state space $\Omega$. Deciding whether the states $\{\omega_i\}_{i=1}^N \subset \Omega$ are perfectly distinguishable is a convex feasibility problem~\cite{BOYD}:
\bb
\begin{array}{ll}
\mathrm{find} & e_1,\ldots, e_N \in V^* \\
\mathrm{s.t.} & e_1,\ldots, e_N \in C^*,\quad \sum_{i=1}^N e_i = u,\quad e_i\cdot \omega_i=1\ \ \forall\ i=1,\ldots, N\, .
\end{array}
\ee
If $C$ is polyhedral with $M$ extremal rays then the above program becomes linear, and can be solved in time at most $O\left( d(d+M)^{3/2} N^{5/2}\right)$.
\end{cor}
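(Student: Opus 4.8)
The plan is to read this corollary off from Lemma~\ref{convex_program_lemma} after one elementary observation: to test perfect distinguishability it suffices to impose the ``diagonal'' constraints $e_i\cdot\omega_i=1$, since positivity together with the normalisation $\sum_i e_i=u$ then automatically force the full orthogonality system $e_i\cdot\omega_j=\delta_{i,j}$ as well as the effect condition $e_i\in E$.

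First I would establish the feasibility reformulation. Suppose $e_1,\dots,e_N\in C^*$ satisfy $\sum_{i=1}^N e_i=u$ and $e_i\cdot\omega_i=1$ for every $i$. Fix an index $j$. Since $\omega_j\in\Omega\subseteq C$ and each $e_i\in C^*$, every term $e_i\cdot\omega_j$ is nonnegative, so $1=u(\omega_j)=\sum_{i=1}^N e_i\cdot\omega_j\geq e_j\cdot\omega_j=1$ forces $e_i\cdot\omega_j=0$ for all $i\neq j$, i.e.\ $e_i\cdot\omega_j=\delta_{i,j}$. Moreover $u-e_i=\sum_{k\neq i}e_k\in C^*$, so $e_i\in C^*\cap(u-C^*)=E$, and $(e_i)_{i=1}^N$ is a bona fide measurement witnessing perfect distinguishability in the sense of Definition~\ref{def:perfectly_distinguishable}. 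The converse is immediate: any measurement witnessing perfect distinguishability is a fortiori a feasible point of the stated program. Hence the program is solvable precisely when $\{\omega_i\}_{i=1}^N$ are perfectly distinguishable; and since its constraints consist of the affine equalities $\sum_i e_i=u$ and $e_i\cdot\omega_i=1$ together with the convex conic memberships $e_i\in C^*$, it is a convex feasibility problem in the sense of~\cite{BOYD}.

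For the polyhedral case and the running-time bound I would reuse the argument in the proof of Lemma~\ref{convex_program_lemma} essentially verbatim: when $C=\cone\{v_1,\dots,v_M\}$, the condition $e_i\in C^*$ is equivalent to the $M$ linear inequalities $e_i\cdot v_j\geq 0$, so the feasibility problem becomes a linear program with the same $n=d(N-1)\sim dN$ real variables (eliminating $e_N$ via $\sum_i e_i=u$) and $O(NM)$ constraints, the $N$ extra equalities $e_i\cdot\omega_i=1$ not affecting the asymptotics. Cleaner still: since $e_i\cdot\omega_i\leq u(\omega_i)=1$ for any measurement, one always has $P_s^{\max}\leq 1$, with equality iff the $\omega_i$ are perfectly distinguishable by the observation above; so one may simply run the linear program~\eqref{p_success_discrimination_polyhedral} with $p_i=1/N$ and check whether its optimum equals $1$, and Vaidya's estimate~\cite{Vaidya1989} then applies unchanged, yielding the claimed $O\!\left(d(d+M)^{3/2}N^{5/2}\right)$ bound.

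There is no substantive obstacle here; the only point requiring a moment's care is the first observation — that nonnegativity plus $\sum_i e_i=u$ promotes the diagonal constraints to the full $\delta_{i,j}$ system and makes each $e_i$ an effect automatically — after which everything is inherited directly from Lemma~\ref{convex_program_lemma}.
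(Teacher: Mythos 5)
Your proposal is correct and follows the same route the paper intends: the corollary is stated as an immediate consequence of Lemma~\ref{convex_program_lemma}, and your polyhedral/LP and Vaidya running-time discussion is inherited from that lemma exactly as the paper does. Your opening observation --- that $e_i\in C^*$, $\sum_i e_i=u$ and the diagonal constraints $e_i\cdot\omega_i=1$ already force $e_i\cdot\omega_j=\delta_{i,j}$ and $e_i\in E$, so the feasibility program is genuinely equivalent to perfect distinguishability --- is a correct and welcome spelling-out of the step the paper leaves implicit.
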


\subsection{Restricting the search}\label{sec:RestrictingTheSearch}

Consider for simplicity a GPT $(V,C,u)$ whose cone is polyhedral. Thanks to Lemma~\ref{convex_program_lemma}, we know that whether a given set of states $\{\omega_i\}_i$ can be discriminated perfectly can be decided efficiently. But how do we start searching for a maximal mutually $N$-wise distinguishable set of states?
Before we proceed to answer this, we first show that the search can be restricted to pure states, i.e.\ to extremal points of the state space.

\begin{lemma}
Let $(V,C,u)$ be a $d$-dimensional GPT with state space $\Omega$. Given some $N\geq 2$, a mutually $N$-wise distinguishable set can be searched among pure states, i.e.\ extremal points of $\Omega$.
\end{lemma}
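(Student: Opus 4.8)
The plan is to show that if a mutually $N$-wise distinguishable set $\mathcal{S}\subseteq\Omega$ contains a state $\omega$ that is not pure, then we can replace $\omega$ by one of the pure states in its convex decomposition without destroying the mutual $N$-wise distinguishability property, and then iterate. First I would write $\omega=\sum_k \lambda_k \pi_k$ as a convex combination of pure states $\pi_k\in\ext(\Omega)$ with $\lambda_k>0$ and $\sum_k\lambda_k=1$, which exists by compactness of $\Omega$ and the Krein--Milman/Minkowski theorem. The key observation is the following: for any fixed $N$-element subset $S\subseteq\mathcal{S}$ containing $\omega$, perfect distinguishability means there is a measurement $(e_i)_{i\in S}$ with $e_i(\omega_j)=\delta_{ij}$; in particular the effect $e_\omega$ associated with $\omega$ satisfies $e_\omega(\omega)=1$. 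Since $e_\omega$ is an effect, $0\le e_\omega(\pi_k)\le 1$ for every $k$, and $1=e_\omega(\omega)=\sum_k\lambda_k e_\omega(\pi_k)$ with $\sum_k\lambda_k=1$ forces $e_\omega(\pi_k)=1$ for every $k$. Likewise, for $i\neq\omega$ in $S$ we have $0=e_i(\omega)=\sum_k\lambda_k e_i(\pi_k)$ with all terms nonnegative, hence $e_i(\pi_k)=0$ for every $k$. Therefore the \emph{same} measurement $(e_i)_{i\in S}$ still perfectly distinguishes the set obtained from $S$ by swapping $\omega$ for any single $\pi_k$.

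The slight subtlety is that mutual $N$-wise distinguishability of $\mathcal{S}$ is a condition on \emph{all} $N$-subsets simultaneously, so when we replace $\omega$ by a fixed pure state $\pi:=\pi_{k_0}$ we must check every $N$-subset of the new set $\mathcal{S}':=(\mathcal{S}\setminus\{\omega\})\cup\{\pi\}$. An $N$-subset of $\mathcal{S}'$ either does not contain $\pi$, in which case it is an $N$-subset of $\mathcal{S}\setminus\{\omega\}\subseteq\mathcal{S}$ and hence already distinguishable; or it contains $\pi$, in which case it is $(S\setminus\{\omega\})\cup\{\pi\}$ for the $N$-subset $S:=(S'\setminus\{\pi\})\cup\{\omega\}$ of $\mathcal{S}$, and the argument above applied to that particular $S$ and its distinguishing measurement produces a measurement distinguishing $S'$. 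One caveat: if $\pi$ already belongs to $\mathcal{S}\setminus\{\omega\}$, then $|\mathcal{S}'|<|\mathcal{S}|$, which is undesirable; to avoid this we simply choose $k_0$ so that $\pi_{k_0}\notin\mathcal{S}\setminus\{\omega\}$. Such a choice is always possible unless \emph{all} the pure states $\pi_k$ in the decomposition already lie in $\mathcal{S}$ — but that situation cannot arise for a genuinely non-pure $\omega$ in a set that is, say, already pairwise distinguishable, because distinct states in a pairwise distinguishable set are affinely independent, so $\omega$ cannot be a nontrivial convex combination of other members of $\mathcal{S}$; alternatively one observes that if it did happen we would just have redundancy and could drop $\omega$ outright. (Since $N\geq 2$, mutual $N$-wise distinguishability implies pairwise distinguishability, so this affine-independence remark applies.)

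Carrying this out, I would replace the non-pure states one at a time: each replacement strictly decreases the number of non-pure states in the set while preserving cardinality and the mutual $N$-wise distinguishability property. Since $\mathcal{S}$ is finite (or, if one wants the statement for infinite sets, since one applies this to any finite subset that realizes the cardinality in Definition~\ref{def:compression-factor}), after finitely many steps we arrive at a mutually $N$-wise distinguishable set of the same cardinality consisting entirely of pure states. This proves that the search for a maximal such set can be confined to $\ext(\Omega)$.

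The main obstacle, such as it is, is purely bookkeeping: making sure the replacement does not collapse the cardinality. The geometric heart — that an effect equal to $1$ on a convex combination must equal $1$ on each pure component, and an effect equal to $0$ on it must vanish on each component — is immediate from the effect inequalities $0\le e(\cdot)\le 1$ together with the extremality/convexity structure, and is the only nontrivial ingredient.
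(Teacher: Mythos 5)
Your argument is, in substance, the paper's own: decompose each mixed state into pure components, and note that any measurement $(e_j)_j$ with $e_j(\omega_i)=\delta_{ij}$ automatically satisfies $e_j(\varphi)=1$ on every pure component $\varphi$ of $\omega_j$ carrying positive weight and $e_j(\varphi')=0$ on every pure component of $\omega_i$ for $i\neq j$, so the \emph{same} measurement distinguishes the substituted tuple. The paper simply performs all substitutions at once and is silent on cardinality; your one-replacement-at-a-time bookkeeping is a harmless variant, and making the cardinality preservation explicit is a genuine (if small) improvement over the paper's write-up.

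One correction is needed, however: your justification for why a fresh component $\pi_{k_0}\notin\mathcal{S}\setminus\{\omega\}$ can always be chosen invokes the claim that distinct states in a pairwise distinguishable set are affinely independent, and that claim is false. The four vertices of the square state space (the $m=2$ case of the hypercube construction of Section~\ref{subsec:hypercube}) are pairwise perfectly distinguishable, yet four points in a two-dimensional affine plane are never affinely independent --- indeed, the whole point of Theorem~\ref{thm:pairwise} is that pairwise distinguishability does \emph{not} force the linear/affine independence that joint distinguishability would. What you actually need is weaker and is true: a non-pure $\omega\in\mathcal{S}$ cannot have \emph{all} of its positive-weight pure components lying in $\mathcal{S}\setminus\{\omega\}$, for then $\omega=\sum_k\lambda_k\pi_k$ would be a nontrivial convex combination of other members of $\mathcal{S}$; taking an effect $e$ with $e(\omega)=1$ and $e(\pi_{k_1})=0$ for some $k_1$ with $\lambda_{k_1}>0$ (such an $e$ exists by pairwise distinguishability, obtained if necessary by coarse-graining an $N$-outcome discriminating measurement) gives $1=e(\omega)=\sum_k\lambda_k e(\pi_k)\leq 1-\lambda_{k_1}<1$, a contradiction --- i.e.\ exactly the effect argument you already use elsewhere. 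Also drop the fallback ``we could just discard $\omega$'': that would shrink the set, which is precisely what you set out to avoid; fortunately, by the above it never occurs. With that repair your proof is complete and coincides with the paper's.
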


\begin{proof}
Let us assume that a mutually $N$-wise distinguishable set $\mathcal{S}\subseteq \Omega$ has been found. Every $\omega\in \mathcal{S}$ will admit a (not necessarily unique) decomposition of the form $\omega = \sum_{i=1}^d p_i^\omega \varphi_i^\omega$, where $\varphi_i^\omega \in \Omega$ are pure states. Let us pick $i$ such that $p_i^\omega>0$, and consider the associated pure state $\varphi_i^\omega$. Repeating this procedure for every $\omega\in \mathcal{S}$, we can form a set of pure states $\mathcal{S}' = \{ \varphi_i^\omega:\, \omega\in \mathcal{S},\, p_i^\omega>0 \}$.

We claim that also $\mathcal{S}'$ is mutually $N$-wise distinguishable. To see why, pick some pure states $\varphi_{i_1}^{\omega_1},\ldots, \varphi_{i_N}^{\omega_N}\in \mathcal{S}'$, and consider the corresponding states $\omega_1,\ldots, \omega_N\in \mathcal{S}$. Let $(e_j)_{j=1,\ldots, N}$ be the measurement that achieves perfect discrimination of the set $\{\omega_j\}_j$, i.e.\ such that $1= e_j\cdot \omega_j=\sum_i p_i^{\omega_i} e_j \cdot \varphi_i^{\omega_j}$ for all $j$. We immediately deduce that $e_j\cdot \varphi_i^{\omega_j}=1$ for all $i$ and $j$ such that $p_i^{\omega_j}>0$, and in particular $e_j\cdot \varphi_{i_j}^{\omega_j}=1$ for all $j$. This implies that the states $\varphi_{i_1}^{\omega_1},\ldots, \varphi_{i_N}^{\omega_N}$ are perfectly distinguishable by means of the measurement $(e_j)_{j=1,\ldots, N}$.
\end{proof}

We could wonder whether a similar restriction applies to the measurements as well, i.e.\ whether it suffices to restrict the search to extremal effects. After all, if $e\cdot \omega=1$ and $e = \sum_i p_i^e f_i$ with $f_i$ extremal effects, it follows that $f_i \cdot \omega=1$ whenever $p_i>0$; we could therefore imagine to replace $e$ with any $f_i$ such that $p_i>0$. The reason why this does not work, however, is that doing so in general alters the sum of all the effects, which needs to be equal to the order unit. This means that in general restricting to extremal effects is not guaranteed to yield all possible feasible measurements. We construct an example to demonstrate this in~\ref{app:restricting_search}.

Nonetheless, the fact that we can restrict ourselves to the finite set of pure states makes our search for the largest $N$-wise mutually distinguishable set of states {much} easier to approach. Restricting ourselves to extremal effects would have been useful in that it would %enable
have enabled us to simplify the search for distinguishing measurements, but the convex approach described in Section~\ref{sec:ConvexProgram} serves perfectly well to that purpose. The restriction to pure states, on the other hand, means that the next component of our search can take place on the terrain of a finite, rather than infinite, set.

\subsection{Finding the largest $N$-wise mutually distinguishable set of states}\label{sec:SearchForMutually}
Our search can be split into two distinct steps:
\begin{itemize}
    \item \textbf{Joint distinguishability:} For a given GPT with a state space $\Omega$, discover all subsets  of $\ext(\Omega)$ which are {$N$-wise} distinguishable. For example, if we had $N=3$, we would be finding all triples of pure states which were distinguishable by a tripartite measurement. We call such sets $\theta^i$, and the set of such sets $\Theta = \{\theta^i\}$.
    \item \textbf{Mutual distinguishability:} Find the largest set $\Phi\subseteq\ext(\Omega)$ such that every subset $\phi \subset \Phi$ of cardinality $|\phi| = N$ is also an element of $\Theta$. This  means that any $N$ size subset of $\Phi$ is $N$-wise distinguishable --- or, equivalently, that $\Phi$ is $N$\emph{-wise mutually distinguishable}.
\end{itemize}

The first of these steps can be straightforwardly achieved using the methods described in Section~\ref{sec:ConvexProgram}.
%using the following straightforward routine:
%
%\begin{algorithm}[H]
%\SetAlgoLined
%$\Theta = \emptyset$\\
%$\boldsymbol{\Gamma} = $ \texttt{Combinations}($\ext(\Omega), N$)\\
%\For{$\theta \in \boldsymbol{\Gamma}$}
%{
%\If{\texttt{Conv}($\theta$)}
%    {$\Theta \coloneqq \Theta \cup \{\theta\}$}
%}
%\caption{
%Finding the set $\Theta$ of $N$-wise distinguishable subsets
%%$ = \{\theta^i\}$, where $|\theta^i|=N\ \forall i\in I$, and all $\theta^i$ are jointly distinguishable.}
%}
%\label{alg:FindPhi}
%\end{algorithm}
%
%In which \texttt{Combinations($A, N$)} is a function which yields all distinct subsets of $A$ which have cardinality $N$; and \texttt{Conv($\theta$)} refers to the %program outlined in Section~\ref{sec:ConvexProgram}, returning \texttt{TRUE} if its input is jointly distinguishable, and \texttt{FALSE} otherwise.
Once the set of jointly distinguishable sets $\Theta$ is in hand, we can proceed to finding the largest $N$-wise mutually distinguishable set $\Phi$. Given that we know the elements of $\Theta$, we know all groups of states which are $N$-wise distinguishable. We can think of this relationship between states --- that of being $N$-wise jointly distinguishable ---  as a connection between them. In fact, we can take this logic literally; we can construct a hypergraph overlaying our state space. Formally, recall that an (undirected) hypergraph is a pair $(V,E)$, where $V$ is a (finite) set of so-called nodes, and $E$ is a subset of the power set of $V$, i.e.\ a collection of subsets of $V$. We refer to the elements of $E$ as hyperedges, and to $E$ itself as the hyperedge set. A hypergraph is called $N$-regular if each hyperedge has cardinality precisely $N$.

In the hypergraph we construct, the vertices correspond to the states we are considering (typically the pure states of the theory), and the hyperedges are all the subsets of $N$ states that are perfectly distinguishable. 
%connect each state to those with which it is {$N$-wise} distinguishable. 
Note that if $N=2$ then every edge connects two vertices, yielding an ordinary graph. 
%However, if $N>2$, the edges of our graphs are $N-1$ dimensional edges on a hypergraph. If all of the edges on a hypergraph connect the same number of vertices (say, $N$), that hypergraph is called $N$ regular. We can then define a \emph{distinguishability hypergraph}, based on a GPT state space:

\begin{Def}[Distinguishablity hypergraphs] \label{def:hypergraph}
Given some integer $N\geq 2$ and a GPT with state space $\Omega$ and finitely many pure states, i.e.\ such that $|\ext (\Omega)|<\infty$, the $N$-distinguishability hypergraph of $\Omega$, denoted \tcb{$\mathcal{G}(\Omega; N)=\big(\ext(\Omega),\Theta\big)$, is the $N$-regular hypergraph with node set $\ext(\Omega)$ and set of hyperedges $\Theta$} given by all subsets of $\ext(\Omega)$ of cardinality $N$ which are perfectly distinguishable in $\Omega$ according to Definition~\ref{def:perfectly_distinguishable}.
\end{Def}

Above, we described the task of finding $\Phi$ as follows:

\begin{quote}
\emph{Find the largest set $\Phi\subseteq\ext(\Omega)$ such that every subset $\phi \subset \Phi$ of cardinality $|\phi| = N$ is also an element of $\Theta$.}
%  Find the largest set $\Omega_K\subseteq\Omega$ such that every subset $\Omega_k^i =\Omega_N^j$, given that $\Omega_k^i\subseteq\Omega_K, |\Omega_k^i| = N$.
\end{quote}

With the graph-theoretic view of our problem in mind, we can re-formulate this problem as follows:

\begin{quote}
\emph{What is the largest sub-graph $\mathcal{G}'$ of $\mathcal{G}(\Omega; N)$ which is $N$-complete, in the sense that every subset of nodes of $\mathcal{G}'$ of cardinality $N$ is a hyperedge?}
\end{quote}
% Which is to say, what's the largest sub-graph of $\mathcal{G}$ in which all vertices are connected to all others in all ways possible using $N$ connection edges?
% We illustrate this problem in Figure~\ref{fig:MaxCliqueRowImage}.

This being a particular phrasing of the well-known \emph{maximum clique problem}. Or, to be more precise; in our case we seek the maximum $N$-clique on an $N$-regular hypergraph. On ordinary graphs (ordinary in the sense that they are not hypergraphs), the problem is well studied~\cite{Vassilevska2009,Cazals2008, Sun2020, Carraghan1990, Segundo2011}, and algorithms are known both for exact solutions, and for faster, inexact solutions --- a review appears in~\cite{Wu2015}. This problem is known to be \textbf{\textsc{NP}}-complete~\cite{Karp1972}.

In the case of hypergraphs, however, less is known. \tcb{The problem can be tackled by adapting an existing algorithm called \texttt{hClique}~\cite{Torres-Jimenez2017}. In our notation, the procedure works by examining each edge in turn, and finding the largest clique $Q$ branching out from the nodes on that edge. In order to do this, we begin with an edge $\theta$, and set the initial clique $Q$ to the nodes connected by that edge $\theta$.  We then examine the set of nodes not included in $\theta$, $\Omega'=\ext(\Omega) \setminus \theta$. Then, for each $\omega\in \Omega'$, we check if $\omega$ is fully connected to $Q$. If it is, then it can be added to $Q$, and the clique can grow.}

\begin{Def}[Fully connected cliques] \label{def:fully_connected}
\tcb{Let $\Omega$ be the state space for a GPT $(V,C,u)$, and assume that $\Omega$ has finitely many pure states, i.e.\ that $|\ext(\Omega)|<\infty$. 
%Let $\Theta$ be the set of $N$-wise distinguishable sub-sets of $\ext(\Omega)$, and 
Let $\mathcal{G}(\Omega; N) = \big(\ext(\Omega),\Theta\big)$ be the $N$-distinguishability hypergraph of $\Omega$, as per Definition~\ref{def:hypergraph}. Let $\theta\in \Theta$ be a hyperedge of $\mathcal{G}(\Omega; N)$. A node $\omega\in \ext(\Omega)\setminus \theta$ is fully $N$-connected to $\theta$ if for all $a\in \theta$ it holds that $(\theta\setminus a)\cup \{\omega\}\in \Theta$.
%Given $Q' = (Q\setminus a)\cup \{\omega\}$, $\exists \theta^j \in \Theta: \theta^j = q_l$ $\forall q_l \in Q'$, where $|q_l| = N$; and this is true  $\forall\,a\in Q$.
}
\end{Def}

%In words: we remove an element $a$ from our clique and replace it with our candidate node $\omega$ (giving the set $Q'$). We then create every sub-group of size $N$ from $Q'$ (these are $\{q_l\}$). If every one of these sub-groups matches an edge $\theta\in \Theta$, then we can say that $\omega$ is fully connected to $Q\setminus a$. If this is true for all $a\in Q$, then $\omega$ is fully connected to $Q$. Now, if this is the case, we can add $\omega$ to the clique, $Q:=Q\cup \{\omega\}$.

This process will discover the largest clique which can be built out from each edge; this is the set of maximal cliques. The largest of these will be the \emph{maximum clique}, our object of interest.

% The temporal complexity of this algorithm

\begin{figure}
    \centering
    \includegraphics[width=\linewidth]{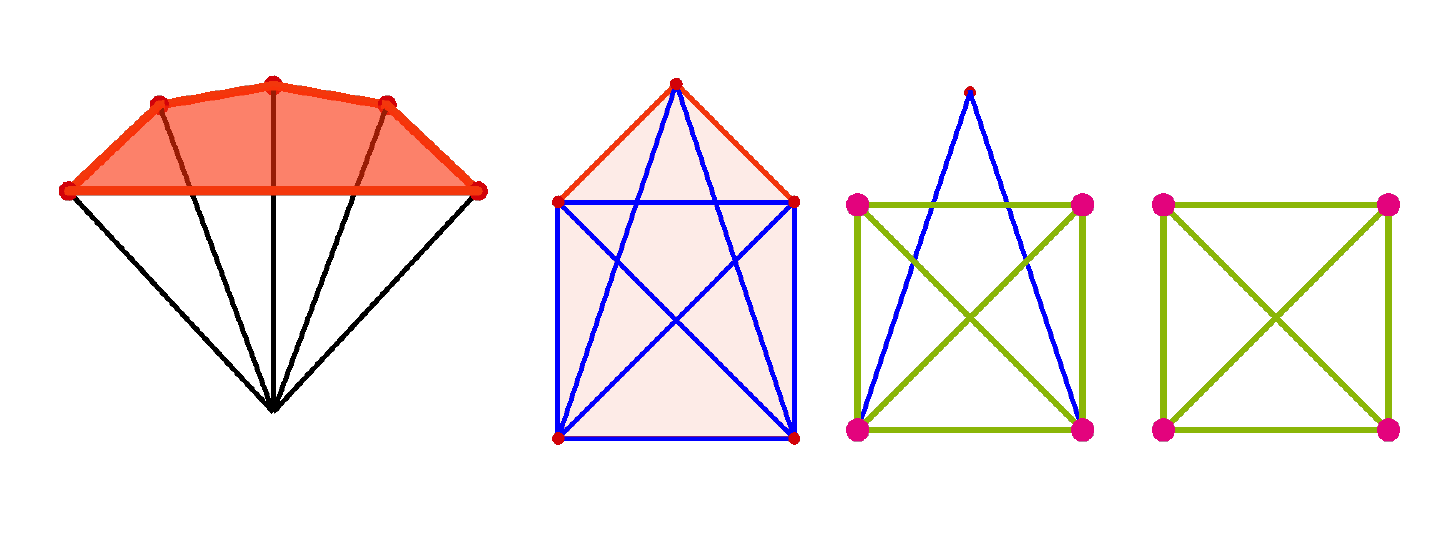}
    \caption{Illustration of how we would isolate the maximum $N$-clique for a simple state space. From left to right: The first image the state space of a hypothetical GPT. In the next image, we discard the third dimension and connect the states which are pairwise distinguishable using blue lines. The third image then isolates those states which form a clique: all four of the larger, pink vertices are connected to one another by blue edges, meaning they are \emph{mutually pairwise distinguishable}. The top state is only connected to two of these, so is excluded from the clique. Since no node can be added to this clique, it is maximal.  In this case this clique is also the maximum clique. The rightmost and final image isolates this maximum clique as a graph of four states.}
    \label{fig:MaxCliqueRowImage}
\end{figure}

If we take $N=2$, we have the simpler problem of finding the maximum clique on a (non-hyper) graph. Note that this is the case even for high-dimensional state spaces, because the dimensionality of the hypergraph described in Definition~\ref{def:hypergraph} depends only upon the number of states connected by each edge (perfectly distinguishable through a single measurement), not upon the dimension of $V$ itself. The case depicted in Figure~\ref{fig:MaxCliqueRowImage} is two dimensional in two senses: the original state space occupies a two-dimensional surface embedded in $\R^3$, and the graph formed from it can be represented on the plane, since each edge is a line.

As stated above, the problem of discovering the largest set of $N$-wise mutually distinguishable states for a given GPT --- equivalent to finding the maximum $N$-clique for an undirected hypergraph --- may not be amenable to a closed analytical solution in general. Even though we do not yet know which theories would be optimal for associative memories in the case that $N>2$, our methods in this Section reveal an exact approach for probing candidate theories, in any dimension and for any $N$.

\section{Discussion}\label{sec:Discussion}

In this paper we discussed a simple model of associative memory, in the form of a GPT system capable of being in any one of $2^m$ states in such a way that any $N$ of them are perfectly distinguishable. When $N=2$, we could characterise precisely the GPTs performing optimally at this task: they are theories whose state space is shaped as an $m$-dimensional hypercube. We proved in Theorem~\ref{thm:pairwise} that such theories outperform classical and quantum theories exponentially, in the sense that they have dimension $d(2,m)=m+1$, while any classical or quantum system with the same properties needs to have dimension $O(2^m)$.
We extended our analysis to the asymptotic case of arbitrary fixed $N$ and very large $m$, proving in Theorem~\ref{thm:beyond_pairwise} that there exist GPTs with dimension still scaling effectively linearly with $m$, $d(N,m) \leq m^{1+o_N(1)}$ (as $m\to\infty$), for every $N\geq 2$. This  means that, in such a ``big data'' scenario, the exponential improvement enabled by GPTs over classical and quantum theories is  independent of $N$; in other words, \emph{there is plenty of room in the post-quantum world.} \tcb{Following the completion of this paper, further developments of these and related ideas have been presented in recent works~\cite{Weiner2023,Naszodi2023}.}

Though we were not able to generalise our optimality construction -- we do not know, for any given value of $N>2$, what the optimal GPT would be -- we have shown that there exists a reliable and computationally tractable method for discovering the memory capacity of theories for any $N$.
To recap the method, we first showed that we can restrict the search to $N$-sized  subsets of \emph{pure} jointly distinguishable states. The set of such subsets of jointly distinguishable states can be thought of as a connection hypergraph overlaying the set  of pure states of the GPT. Our search for the largest $N$-wise mutually distinguishable set  thus becomes equivalent to the search for the maximum clique on this hypergraph, which can be performed with deterministic success~\cite{Torres-Jimenez2017}

We hope this work could inspire further research into the cognitive abilities of intelligent agents in generalised probabilistic theories as well as their interplay with classical and quantum learning models~\cite{transformer,ultimatebrain}.

\section*{Acknowledgements}

LL thanks Guillaume Aubrun \tcb{and Mih\'{a}ly Weiner} for inspiring discussions on this topic. He is indebted to Guillaume Aubrun as well as to Boaz Slomka for bringing to his attention the paper by Danzer and Gr\"unbaum~\cite{Danzer1962}. LL acknowledges support from the Alexander von Humboldt Foundation. DG and GA thank Paul Knott for illuminating discussions, and acknowledge support from the Foundational Questions Institute (FQXi) under the Intelligence in the Physical World Programme (Grant No. RFP-IPW1907).

\section*{References}

\bibliographystyle{iopart-num-g}
\bibliography{biblio,library}

\appendix

\section{Proof of the Danzer--Gr\"unbaum Theorem}\label{appendix:DG}

\subsection*{Preliminaries: Minkowski addition}

Before delving into the proof, we need to fix some terminology. A \emph{convex body} in the Euclidean space $\R^n$ is a compact convex subset $A \subset \R^n$ with non-empty interior, in formula $\inter(A)\neq \emptyset$. We say that two convex bodies $A,B\subset \R^n$ \emph{touch each other} if $A\cap B \neq \emptyset$ but $\inter(A)\cap \inter(B)=\emptyset$, which corresponds to the intuitive notion of two solids touching only at their surfaces.

Two sets $A,B\subseteq \R^n$ can be added together via the \emph{Minkowski addition}, defined by
\begin{equation}
    A+B\coloneqq \left\{a+b:\, a\in A,\, b\in B\right\} .
\end{equation}
In what follows, for some $x\in \R^n$ we will often write $x+A$ instead of $\{x\}+A$. We can also multiply a given set by any real number $\lambda\in \R$, by setting
\begin{equation}
    \lambda A \coloneqq \left\{\lambda a:\, a\in A\right\} .
\end{equation}
Naturally, the Minkowski difference between two sets $A,B\subseteq \R^n$ can now be constructed as $A-B\coloneqq A + (-B)$. If $A$ and $B$ are convex then also $A+B$ and $\lambda A$ are such. If they are convex bodies and $\lambda\neq 0$, then also $A+B$ and $\lambda A$ are convex bodies. A special type of Minkowski addition is the \emph{Minkowski symmetrisation}. For $A\subseteq \R^n$, this is defined by
\begin{equation}
    \wtildea{A} \coloneqq \frac{A-A}{2}\, .
\end{equation}
Clearly, if $A$ is a convex body then so is $\wtildea{A}$. In what follows we will need the following standard lemma, whose proof is included only for the sake of completeness (it follows e.g.\ from~\cite[Corollary~6.6.2]{ROCKAFELLAR}).

%\begin{lemma} \label{open sum lemma}
%Let $A\subset \R^n$ be open; then for all $B\subset \R^n$ their sum $A+B$ is also open.
%\end{lemma}

\begin{lemma} \label{interior Minkowski symm lemma}
Let $A\subset \R^n$ be a convex body. Then
\bb
\wtildea{\inter A} = \inter \wtildea{A}\, .
\ee
\end{lemma}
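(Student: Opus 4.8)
We must show $\wtildea{\inter A} = \inter \wtildea{A}$, where $\wtildea{A} = \frac{A - A}{2} = \frac{1}{2}A + \frac{1}{2}(-A)$ and $A$ is a convex body. The plan is to prove the two inclusions separately, using only elementary convexity facts about the interior of a Minkowski sum.

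For the inclusion $\wtildea{\inter A} \subseteq \inter\wtildea{A}$: take a point $z = \frac{1}{2}(a - b)$ with $a, b \in \inter A$. Since $a \in \inter A$, there is $\delta > 0$ with $a + \delta B_1 \subseteq A$, where $B_1$ is the closed unit ball. Then $z + \frac{\delta}{2}B_1 = \frac{1}{2}\big((a + \delta B_1) - b\big) \subseteq \frac{1}{2}(A - A) = \wtildea{A}$, so $z \in \inter\wtildea{A}$. This direction only needs one of the two points to be interior.

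For the reverse inclusion $\inter\wtildea{A} \subseteq \wtildea{\inter A}$: fix $z \in \inter\wtildea{A}$. The natural approach is to exploit a standard fact about interiors of sums of convex sets — namely that for convex sets one has $\inter(A) + \inter(-A) \subseteq \inter(A + (-A))$, and more usefully, that every interior point of a sum of two convex sets with nonempty interior can be written as a sum of interior points (this is the content of the cited~\cite[Corollary~6.6.2]{ROCKAFELLAR}, applied to $\frac{1}{2}A$ and $\frac{1}{2}(-A)$, both of which are convex bodies). Applying this, $z = \frac{1}{2}a + \frac{1}{2}(-b)$ with $a \in \inter A$ and $b \in \inter A$, i.e.\ $z = \frac{a-b}{2} \in \wtildea{\inter A}$. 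Alternatively, one can argue directly: since $z \in \inter \wtildea{A}$, both $z$ and $z + \eta v$ lie in $\wtildea{A}$ for small $\eta > 0$ and a suitable spanning set of directions $v$, write these as differences $\frac{1}{2}(a_v - b_v)$, average to land strictly inside, and use convexity of $A$ to conclude the resulting representatives are interior points. I would present the Rockafellar-based argument as the clean route and mention the elementary averaging argument only if a self-contained proof is wanted.

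The main obstacle is the reverse inclusion: the forward inclusion is a one-line ball-inflation argument, but showing that an interior point of $\frac{A-A}{2}$ admits a representation as $\frac{a-b}{2}$ with both $a,b$ interior to $A$ requires the nontrivial (though standard) fact that interior points of a Minkowski sum of convex bodies decompose into interior summands. Since the paper explicitly allows citing~\cite[Corollary~6.6.2]{ROCKAFELLAR}, I would lean on that; the only care needed is to check the hypotheses (both $\frac{1}{2}A$ and $-\frac{1}{2}A$ are convex bodies, so their relative interiors are their interiors and the relative-interior identity from Rockafellar specializes to the full-dimensional statement).
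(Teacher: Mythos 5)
Your proposal is correct, but it takes a different route from the paper on the harder inclusion. Your forward inclusion ($\wtildea{\inter A}\subseteq\inter\wtildea{A}$) is essentially the paper's: the paper shows $\wtildea{\inter A}$ is open and contained in $\wtildea{A}$, while you inflate a ball around one summand — same idea, equally valid (and you correctly note only one of the two points needs to be interior). For the reverse inclusion, you lean on Rockafellar's identity $\relint(C_1+C_2)=\relint C_1+\relint C_2$ applied to $\tfrac12 A$ and $-\tfrac12 A$; since both are full-dimensional convex bodies, relative interiors are interiors and the identity gives the equality in one stroke (in fact it gives both inclusions at once). This is legitimate — the paper itself remarks that the lemma follows from exactly that corollary — but the paper's actual proof is deliberately self-contained: given $a\in\inter\wtildea{A}$, it writes $\frac{a}{1-\epsilon}=\frac{b-c}{2}$ with $b,c\in A$ for small $\epsilon$, then mixes $b$ and $c$ with a fixed interior point $p\in\inter A$ to produce interior representatives $(1-\epsilon)b+\epsilon p$ and $(1-\epsilon)c+\epsilon p$ whose half-difference is $a$. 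In effect the paper reproves the relevant special case of the Rockafellar fact from scratch, which is what your vaguely sketched ``elementary averaging'' alternative would have to become if you wanted a self-contained argument; as written that sketch is not a proof, but since you explicitly designate the citation-based argument as your main route, there is no gap. Your approach buys brevity at the cost of an external reference; the paper's buys completeness with a short explicit geometric construction.
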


\begin{proof}
We start by showing that $\wtildea{\inter A} \subseteq \inter \wtildea{A}$. First, note that $\wtildea{\inter A} \subseteq \wtildea{A}$, simply because $\inter(A)\subseteq A$. Second, observe that $\wtildea{\inter A}$ is open. To show this, pick some $a\in \wtildea{\inter A}$, so that $a=\frac{b-c}{2}$ with $b,c\in \inter A$. Let $\epsilon>0$ be such that $\|\delta\|<\epsilon$ implies that $b+\delta,\,c+\delta\in \inter A$, where $\|\cdot\|$ denotes the Euclidean norm. Then as long as $\|\delta\|\leq \epsilon$ we also have that $a+\delta = \frac{(b+\delta) - (c-\delta)}{2}\in \wtildea{\inter A}$. This confirms that $\wtildea{\inter A}$ is indeed open. Since the interior of a set is nothing but its largest open subset, from this and the inclusion $\wtildea{\inter A} \subseteq \wtildea{A}$ we deduce that $\wtildea{\inter A} \subseteq \inter \wtildea{A}$.

For the other inclusion, take $a\in \inter \wtildea{A}$, and some sufficiently small $\epsilon>0$ such that $\frac{a}{1-\epsilon} = \frac{b-c}{2} \in \wtildea{A}$, where $b,c\in A$ (note that the left-hand side converges to $a$ as $\epsilon\to 0^+$ and is thus eventually in $\wtildea{A}$). Consider a point $p\in \inter A$; we now claim that $(1-\epsilon)b+\epsilon p,\, (1-\epsilon) c +\epsilon p\in \inter A$ for all $0<\epsilon<1$. To see this geometrically intuitive fact, fix $\epsilon>0$ and pick $\eta>0$ such that $\|\delta\|\leq \eta$ implies that $p+\delta\in A$. Then as soon as $\left\|\delta'\right\|\leq \epsilon\eta$ we have that for example $(1-\epsilon)b+\epsilon p + \delta' = (1-\epsilon)b+\epsilon (p + \delta)\in A$, where $\delta\coloneqq \delta'/\epsilon$. This proves that $(1-\epsilon)b+\epsilon p,\, (1-\epsilon) c +\epsilon p\in \inter A$, as claimed. Now,
\bb
a = (1-\epsilon)\frac{b-c}{2} = \frac{\left( (1-\epsilon) b+\epsilon p\right) - \left( (1-\epsilon) c + \epsilon p\right)}{2} \in \frac12 \left( \inter A - \inter A\right) = \wtildea{\inter A}\, ,
\ee
concluding the proof.
\end{proof}

\subsection*{The proof}

We are now ready to present Danzer and Gr\"unbaum's argument~\cite{Danzer1962}, in a slightly simplified form.

\begin{proof}[Proof of Theorem~\ref{thm:DG} and therefore of Theorem~\ref{thm:pairwise}]
For a positive integer $n$, some finite subset $X\subset \R^n$, and a convex body $A\subset \R^n$, we define the following properties:
\begin{itemize}
    \item $P(n,X)$: $X$ is not contained in any hyperplane of $\R^n$ (in other words, its affine hull has dimension $n$) and for all distinct $x_1,x_2\in X$ there are parallel hyperplanes $V_1,V_2\subset \R^n$ such that $V_i$ supports $X$ in $x_i$, for $i=1,2$.
    \item $Q(n,A,X)$: For all $x_1,x_2\in X$, the convex bodies $x_1+A$ and $x_2+A$ touch each other.
    \item $Q^*(n,A,X)$: Same as $Q(n,A,X)$, but we additionally require that $A=-A$ (i.e.\ that $A$ be centrally symmetric).
\end{itemize}
Furthermore, let us set
\begin{align}
    p_n &\coloneqq \sup\left\{ |X|:\, \exists\, \text{$X\subset \R^n$ finite:}\ P(n,X) \right\} ,\\
    q_n &\coloneqq \sup\left\{ |X|:\, \exists\, \text{$X\subset \R^n$ finite, $A\subset \R^n$ convex body:}\ Q(n,X,A) \right\} ,\\
    q_n^* &\coloneqq \sup\left\{ |X|:\, \exists\, \text{$X\subset \R^n$ finite, $A=-A\subset \R^n$ convex body:}\ Q^*(n,X,A) \right\} ,
\end{align}
where $|\cdot|$ denotes the cardinality of a finite set, i.e.\ the number of elements it contains. The geometrically intuitive fact that the $2^n$ vertices of the hypercube satisfy $P(n,X)$ --- and hence $p_n\geq 2^n$ --- has been discussed in Section~\ref{subsec:hypercube}, so we will not dwell on it further. The problem is to show that $p_n\leq 2^n$. The proof can be broken down into the following chain of inequalities:
\begin{equation}
    2^n \leq p_n \textleq{(i)} q_n \texteq{(ii)} q_n^* \textleq{(iii)} 2^n\, .
\end{equation}
We now justify one by one the three crucial steps (i)--(iii):
\begin{enumerate}[i]

    \item In fact, for all $n$ and for all sets $X\subseteq \R^n$ we have that $P(n,X)\Longrightarrow Q\left(n,-\co(X),X\right)$, where $\co$ denotes the convex hull. To see this, assume that $P(n,X)$ holds. Then, for $x_1,x_2\in X$ with $x_1\neq x_2$ there exists a hyperplane $V\subset \R^n$ such that the set $X$, and hence also the convex body $\co(X)$, is entirely contained between $x_1+V$ and $x_2+V$. Multiplying by $-1$ and translating, we see that $x_1-\co(X)$ is entirely contained between $V$ and $x_1-x_2+V$, and analogously $x_2-\co(X)$ is entirely contained between $V$ and $x_2-x_1+V$. Since $x_1\neq x_2$, we see that the convex bodies $x_1-\co(X)$ and $x_2-\co(X)$ are entirely contained into each of the two closed half-spaces determined by $V$. This implies that their interiors, which are instead contained into the corresponding \emph{open} half-spaces, are disjoint. Remembering that $0\in (x_1-\co(X))\cap (x_2-\co(X))$, we see that in fact $x_1-\co(X)$ and $x_2-\co(X)$ touch each other.

    \item We show that for all $n$, for all finite $X\subseteq \R^n$, and for all $A\subseteq \R^n$,
    \begin{equation}
        Q(n,A,X)\quad \Longleftrightarrow\quad Q\left(n,\wtildea{A},X\right) ,
        \label{Q-iff-Q*}
    \end{equation}
    so that naturally $q_n=q_n^*$. Start by noting the following: for a set $A\subseteq \R^n$ and two points $x,y\in \R^n$,
    \begin{equation}
    (x+A)\cap(y+A)\neq \emptyset\quad\Longleftrightarrow\quad \frac{x-y}{2}\in \wtildea{A}\, ,
    \label{fact-of-life}
    \end{equation}
    where $\wtildea{A}$ is the Minkowski symmetrisation of $A$. Therefore, for fixed $x_1,x_2\in X$, we have that $(x_1+A)\cap (x_2+A)\neq \emptyset$ if and only if $\frac{x_1-x_2}{2}\in \wtildea{A}$. Since $A$ and $\wtildea{A}$ have the same Minkowski symmetrisation, this is also equivalent to $\left(x_1+\wtildea{A}\right)\cap \left(x_2+\wtildea{A}\right)\neq \emptyset$. In other words,
    \begin{equation}
        (x_1+A)\cap (x_2+A)\neq \emptyset \quad \Longleftrightarrow\quad \left(x_1+\wtildea{A}\right)\cap \left(x_2+\wtildea{A}\right) \neq \emptyset\, .
    \end{equation}
    Applying this to $\inter A$ instead of $A$, we get that
    \begin{equation}
    \begin{aligned}
        &\inter(x_1+A)\cap \inter(x_2+A) = \left(x_1+\inter(A)\right) \cap \left(x_2+\inter(A)\right) =\emptyset \\
        &\qquad \Longleftrightarrow\quad \left(x_1+\wtildea{\inter A}\right)\cap \left(x_2+\wtildea{\inter A}\right) = \inter\left(x_1+\wtildea{A}\right)\cap \inter\left(x_2+\wtildea{A}\right) = \emptyset\, ,
    \end{aligned}
    \end{equation}
    where the identity $x_i+\wtildea{\inter A} = \inter\left(x_i+\wtildea{A}\right)$ follows from Lemma~\ref{interior Minkowski symm lemma}. We have therefore proved that the convex bodies $x_1+A$ and $x_2+A$: (a)~intersect if and only if so do $x_1+\wtildea{A}$ and $x_2+\wtildea{A}$; and (b)~have disjoint interiors if and only if so do $x_1+\wtildea{A}$ and $x_2+\wtildea{A}$. In other words, $x_1+A$ and $x_2+A$ touch each other if and only if also $x_1+\wtildea{A}$ and $x_2+\wtildea{A}$ touch each other.

    \item We now show that $q_n^*\leq 2^n$. To this end, pick a convex body $A=-A\subset \R^n$ and some set $X\subseteq \R^n$ such that $Q^*(n,A,X)$ holds. Set $B\coloneqq \co(X)$. By~\eqref{fact-of-life}, for all $x_1,x_2\in X$ it must hold that $\frac{x_1-x_2}{2}\in \wtildea{A} = A$. Then we claim that for all $x\in X$,
    \begin{equation}
        \frac{x+B}{2}\subseteq x+A\, .
        \label{x+B-disjoint}
    \end{equation}
    To see this, up to taking the convex hull it suffices to show that $\frac{x+y}{2}\subseteq x+A$ for all $y\in X$. And indeed, thanks to the above observation $\frac{x+y}{2} = x + \frac{y-x}{2}\in x+A$. This proves~\eqref{x+B-disjoint}. As an immediate consequence of this together with $Q^*(n,A,X)$, observe that the interiors of the convex bodies $\frac{x+B}{2}$, indexed by $x\in X$, are all disjoint and moreover contained in $B$, because this is convex. \tcb{Since convex bodies are well known to be Lebesgue measurable~\cite{Lang1986}, we can now deduce that the volume of $B$ is at least equal to the sum of the volumes of the bodies $\frac{x+B}{2}$, in formula} 
    %Therefore, their volumes\footnote{Convex bodies are well known to be Lebesgue measurable, see e.g.~\cite{Lang1986} and references therein.} satisfy that
    \begin{equation}
        \vol(B) \geq \sum_{x\in X} \vol\left(\frac{x+B}{2}\right) = \sum_{x\in X} \frac{1}{2^n} \vol\left(x+B\right) = \sum_{x\in X} \frac{1}{2^n} \vol\left(B\right) = \frac{|X|}{2^n}\, \vol(B)\, .
    \end{equation}
    Since $\vol(B)>0$ because $B$ is a convex body, we obtain that $|X|\leq 2^n$, as claimed.
\end{enumerate}
This concludes the proof.
\end{proof}

\section{Prism Theories}\label{sec:PrismTheories}
Expanding a GPT to higher dimensions is a way to explore systems with a variable number of degrees of freedom, but which are governed by a consistent set of relationships. For example, the state space of an $n$-sided ordinary, classical die is represented in GPT form by a simplex with $n$ vertices (representing deterministic preparations of a particular outcome); this allows us to accommodate systems having many degrees of freedom by generalising the same basic geometric pattern to higher dimensions.
Though this works in a straightforward way for classical theory, the situation in quantum theory is more nuanced. The state space of the qubit is represented by the Bloch sphere in three dimensions, but the state space of a qutrit possesses a complicated geometry~\cite{Tabia2013} which is not simply given by a sphere in higher dimensions. Theories using hyperspheres of higher dimension, so-called $D$-balls, are discussed in~\cite{Masanes2014, Krumm2019}; in these the authors aim to isolate the 3-sphere as a the necessary state space for quantum theory based on physical requirements.

Here, we introduce a method for expanding given geometries to higher dimensions in a generic way. We do this by taking the Cartesian product of shapes in lower dimensional spaces. In Figure~\ref{fig:simplexPrisms} we visualise some state spaces shaped as simplices and their corresponding effects, as well as a simplex prism $\mathcal{S}_3\times\mathcal{S}_3$.

However, this way of incorporating new degrees of freedom, although mathematically consistent, does not have a direct operational interpretation: new variables do not have to be independent of the old ones.

It is a feature of the Cartesian product that the product of any two convex sets will produce a new convex set. We can use this feature as the basis to construct new, higher dimensional GPTs by taking the Cartesian product (denoted $\times$) of lower dimensional state spaces. {The resulting GPT is called a \emph{prism theory}. We give a formal definition below:}

% {
% \begin{Def}
% \textbf{Prism Theories:} Given two polytopes $A$ and $B$, such that $a\in A$ and $b \in B$ are points in those polytopes, we will have state spaces $\Omega_A = \{(a, 1)\}$ and $\Omega_B = \{(b, 1)\}$. To construct the state space corresponding to the Cartesian product of polytopes $A$ and $B$, we have
% \begin{equation}
%     \Omega_{C} = \{\omega\}\,|\,\omega_i = (u_i, 1)\forall u_i\in A\times B.
% \end{equation}\label{eq:PrismStates}
% This in turn specifies the effect space
% \begin{equation}
%     E = \{e\}\, | \,  e\cdot\omega\in[0,1].
% \end{equation}
% Note also that the extremal states of $\Omega_C$ will be the Cartesian product of the extremal points of $A$ and $B$, joined with the unit measure.
% \end{Def}
% }

\begin{Def}[Prism theories] \label{def:prism-theories}
Let $A=(V_A,C_A,u_A)$ and $B=(V_B,C_B,u_B)$ be two GPTs. The \textbf{prism theory} $A\oplus B = \left(V_{A\oplus B}, C_{A\oplus B}, u_{A\oplus B}\right)$ is defined as follows:
\begin{enumerate}[i]
    \item $V_{A\oplus B} \coloneqq \ker\left( (u_A, 0) - (0,u_B)\right)\subset V_A\oplus V_B$ is the subspace of $V_A\oplus V_B$ given by the kernel of the functional $(u_A, 0) - (0,u_B)$ whose action is defined by $\left((u_A, 0) - (0,u_B)\right)(x,y) \coloneqq u_A(x) - u_B(y)$;
    \item $C_{A\oplus B}\coloneqq \left(C_A\oplus C_B\right)\cap V_{A\oplus B}$;
    \item $u_{A\oplus B}$ is the restriction of $(u_A,0)$ (equivalently, of $(0,u_B)$) to $V_{A\oplus B}$.
\end{enumerate}
\end{Def}

\begin{figure}
     \centering
     \begin{subfigure}[b]{0.31\textwidth}
         \centering
         \includegraphics[width=\textwidth]{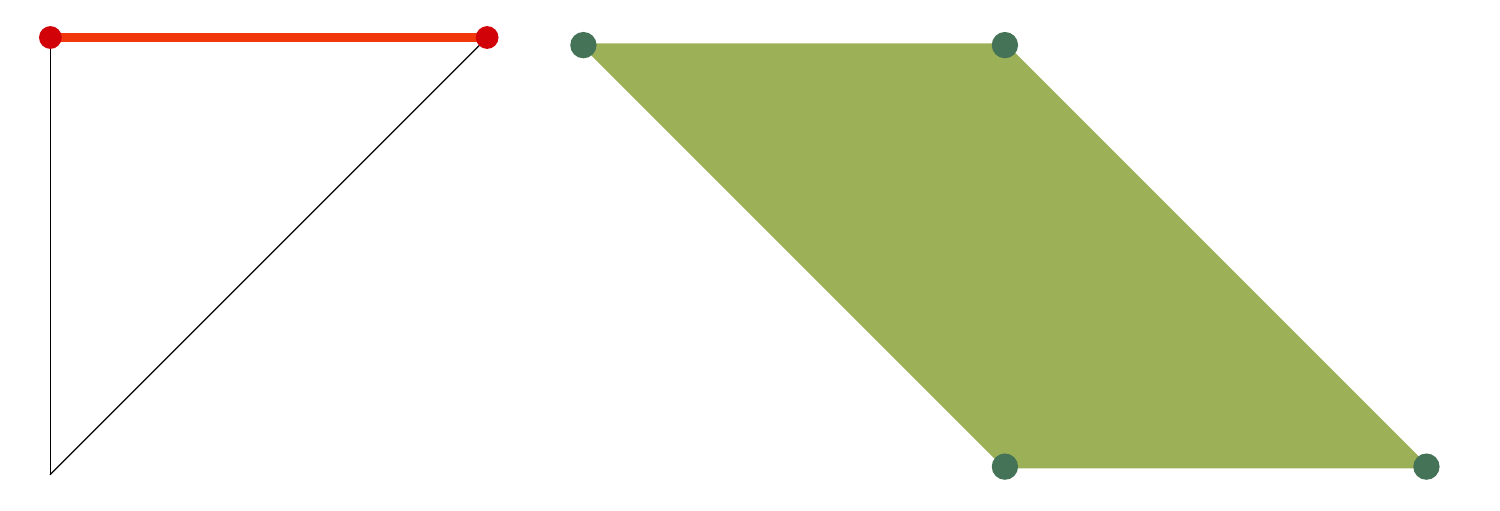}
         \caption{States and effects for simplex with 2 pure states}
         \label{fig:Simplex2dState}
     \end{subfigure}
     \hfill
     \begin{subfigure}[b]{0.31\textwidth}
         \centering
         \includegraphics[width=\textwidth]{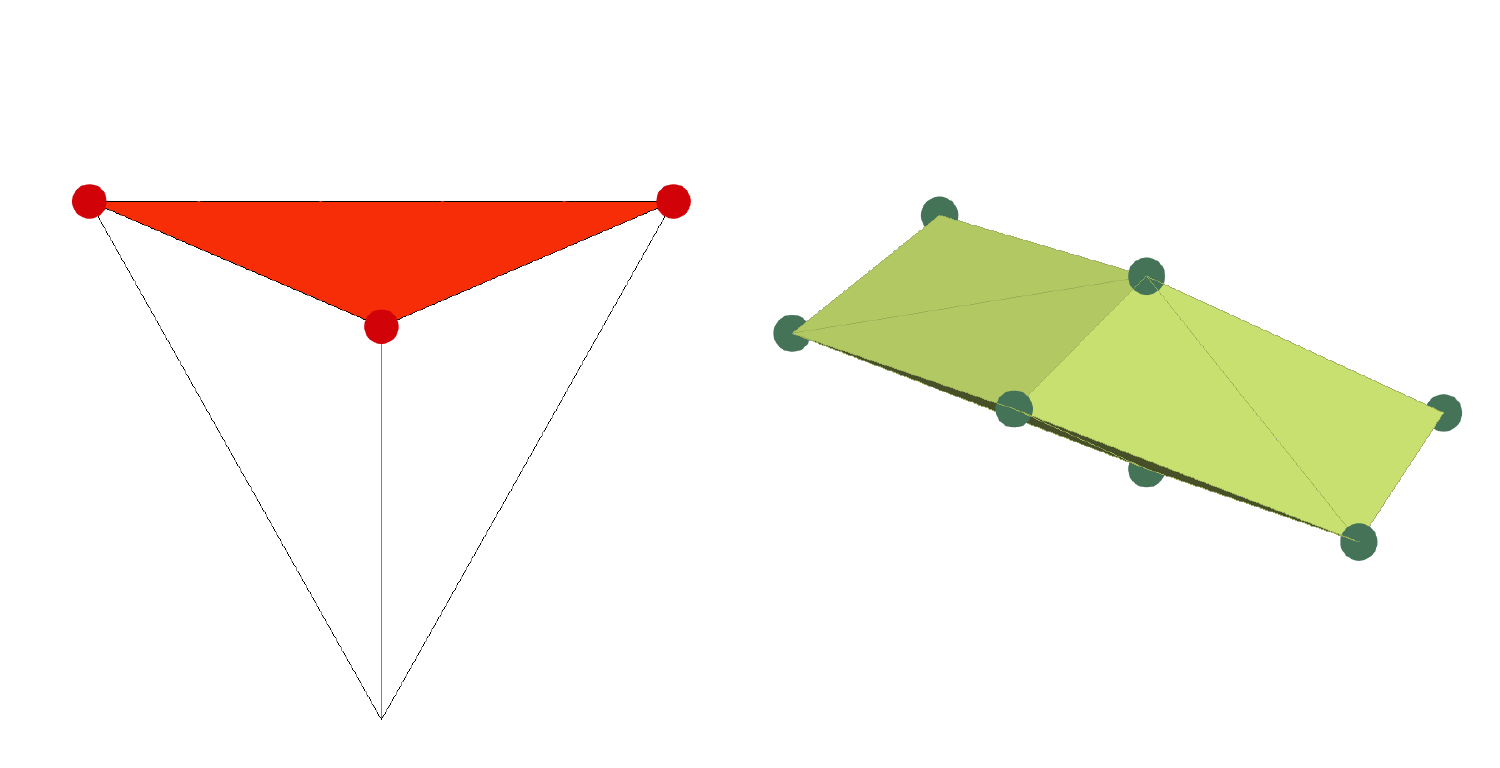}
         \caption{States and effects for simplex with 3 pure states}
         \label{fig:Simplex3dState}
     \end{subfigure}
     \hfill
     \begin{subfigure}[b]{0.31\textwidth}
         \centering
         \includegraphics[width=\textwidth]{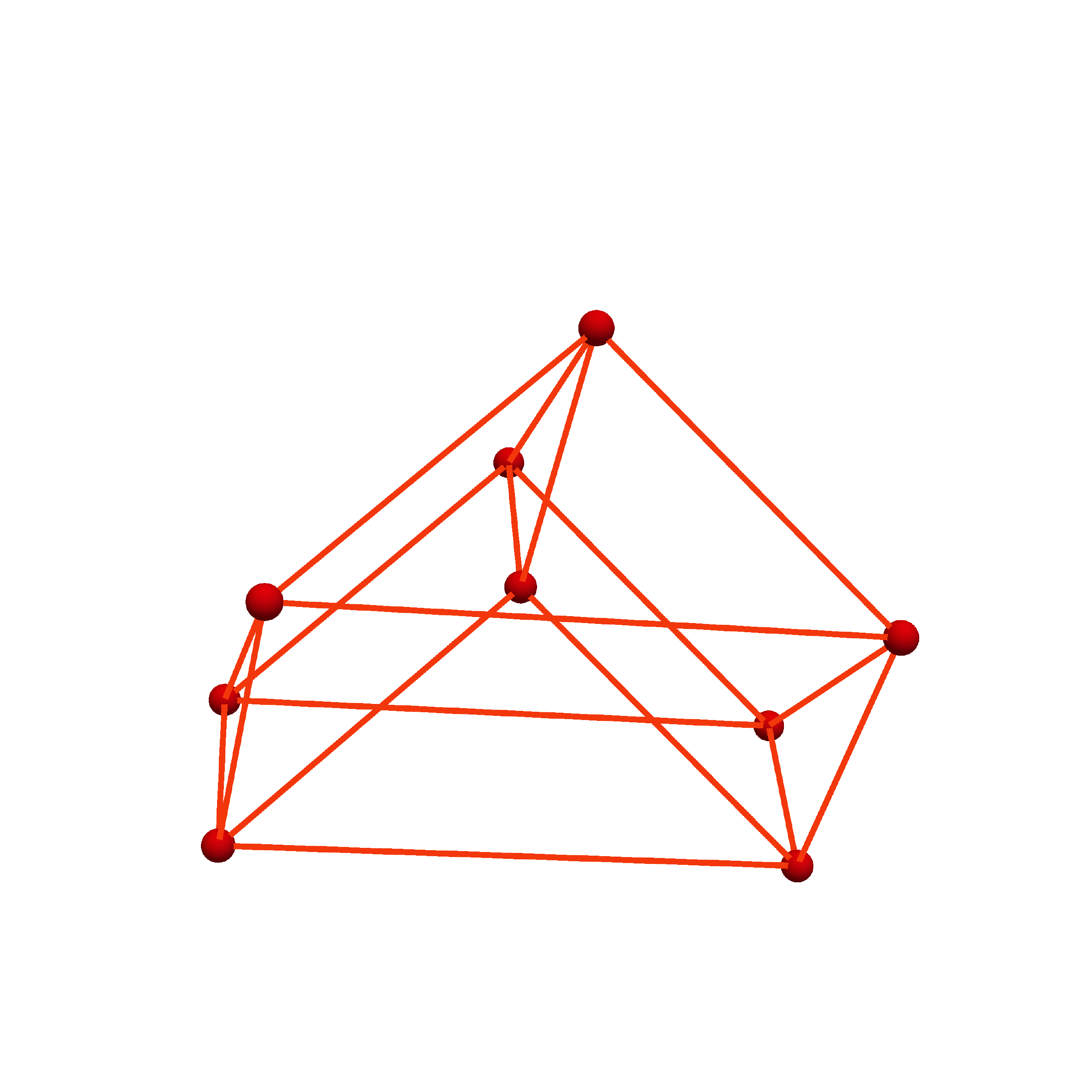}
         \caption{Projection of the shape formed by $\mathcal{S}_3\times\mathcal{S}_3.$ }
         \label{fig:SimplexPrism33}
     \end{subfigure}
     \hfill
     \caption{Depiction of GPTs based on simplices. Image (\subref{fig:Simplex2dState}) shows the states in a theory based on $\mathcal{S}_2$. The two pink points correspond to the extremal states of the theory, and the thin black lines connect these to the origin. The green square shows the space of possible effects, with the darker points signifying the extremal effects. In (\subref{fig:Simplex3dState}) we see the theory corresponding to $\mathcal{S}_3$, in which we've added a dimension. On the right, in (\subref{fig:SimplexPrism33}) we see a representation of the product of two simplices, $\mathcal{S}_3\times \mathcal{S}_3$. Since this yields a shape in four dimensions, we have used a projection to visualise it in three dimensions. Here the entire volume of the polytope should be understood as representing the space of mixed states, with the pink points again corresponding to pure states. } \label{fig:simplexPrisms}
\end{figure}

To unpack the above somewhat complicated definition, it is useful to look at the state spaces. Since the host vector space $V_{A\oplus B}$ is a subspace of the simple direct sum $V_A\oplus V_B$, any state of $A\oplus B$ can also be seen as a vector of the form $\omega_{A\oplus B} = (x,y)\in V_A\oplus V_B$. We observe that item~(ii) implies that in fact $x\in C_A$ and $y\in C_B$, so that $x=\lambda \omega_A$ and $y=\mu \omega_B$, for $\lambda,\mu\geq 0$ and $\omega_A\in \Omega_A \coloneqq C_A\cap u_A^{-1}(1)$, $\omega_B\in \Omega_B \coloneqq C_B\cap u_B^{-1}(1)$. Now, since $\omega_{A\oplus B}$ must belong to the kernel of $(u_A, 0) - (0,u_B)$, we also see that $\lambda=\mu$; if it is a normalised state, then by~(iii) we have that $1=u_{A\oplus B}\left(\omega_{A\oplus B}\right)=(u_A,0)(\lambda \omega_A, \lambda\omega_B) = \lambda$. Therefore, $\omega_{A\oplus B}$ can be simply identified with the pair of states $(\omega_A, \omega_B)$, and vice versa any such pair constitutes a state of $A\oplus B$. We have thus proved the following, which amounts to an intuitive description of the rather cumbersome Definition~\ref{def:prism-theories}:

\begin{lemma} \label{lemma:Cartesian-product}
For any two GPTs $A,B$ with state spaces $\Omega_A, \Omega_B$, the state space of the prism theory $A\oplus B$ is simply the Cartesian product of $\Omega_A$ and $\Omega_B$. In formula,
\begin{equation}
\Omega_{A\oplus B} = \Omega_A\times \Omega_B\, .
\end{equation}
\end{lemma}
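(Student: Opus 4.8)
The plan is to unwind Definition~\ref{def:prism-theories} directly and exhibit an explicit bijection between $\Omega_{A\oplus B}$ and $\Omega_A \times \Omega_B$ that is moreover affine, so that the two convex sets are genuinely identified (not merely in bijection). First I would take an arbitrary $z \in \Omega_{A\oplus B}$. By construction $z$ lives in $V_{A\oplus B} \subseteq V_A \oplus V_B$, so I can write $z = (x,y)$ with $x \in V_A$, $y \in V_B$. Item~(ii) of the definition says $z \in C_A \oplus C_B$, which forces $x \in C_A$ and $y \in C_B$; hence $x = \lambda\, \omega_A$ and $y = \mu\, \omega_B$ for some $\lambda, \mu \geq 0$ and some normalized states $\omega_A \in \Omega_A$, $\omega_B \in \Omega_B$ (with the caveat that if $x=0$ or $y=0$ the corresponding normalized state is not unique — but that case will be excluded momentarily). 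Item~(i) says $z \in \ker\big((u_A,0)-(0,u_B)\big)$, i.e.\ $u_A(x) = u_B(y)$, which gives $\lambda = \mu$. Finally item~(iii), the normalization $u_{A\oplus B}(z)=1$, reads $(u_A,0)(x,y) = u_A(x) = \lambda = 1$. So $\lambda=\mu=1$, $x = \omega_A$, $y=\omega_B$, and $z=(\omega_A,\omega_B)$ with both components normalized states — in particular neither is $0$, so the decomposition is unique. This shows the map $z \mapsto (\omega_A,\omega_B)$ is well-defined from $\Omega_{A\oplus B}$ into $\Omega_A \times \Omega_B$.

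Next I would check the reverse inclusion: given any pair $(\omega_A,\omega_B) \in \Omega_A \times \Omega_B$, the vector $(\omega_A,\omega_B) \in V_A \oplus V_B$ satisfies all three requirements — it lies in $C_A \oplus C_B$ since $\omega_A \in C_A$, $\omega_B \in C_B$; it lies in the kernel of $(u_A,0)-(0,u_B)$ since $u_A(\omega_A) = 1 = u_B(\omega_B)$, hence in $V_{A\oplus B}$, hence in $C_{A\oplus B} = (C_A\oplus C_B)\cap V_{A\oplus B}$; and $u_{A\oplus B}(\omega_A,\omega_B) = u_A(\omega_A) = 1$. So it is a bona fide element of $\Omega_{A\oplus B}$, and this construction is clearly inverse to the map above. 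The identification is realized by (the restriction of) the identity on $V_A \oplus V_B$, which is linear, so the bijection is affine and the two compact convex sets coincide under the natural embedding. This is what the statement $\Omega_{A\oplus B} = \Omega_A \times \Omega_B$ asserts.

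There is no real obstacle here — the statement is essentially a bookkeeping consequence of the definition, and the excerpt has in fact already carried out the argument almost verbatim in the paragraph preceding the lemma. The only point requiring a little care is the non-uniqueness of the decomposition $x = \lambda\omega_A$ when $x$ lies on the boundary of $C_A$: one should note that this ambiguity is harmless because the argument shows $\lambda = 1 > 0$, so $\omega_A = x$ is forced regardless, and similarly for $\omega_B$. I would therefore present the proof as: (1) any state of $A\oplus B$ has the form $(\omega_A,\omega_B)$ with $\omega_A\in\Omega_A$, $\omega_B\in\Omega_B$, by running items~(i)--(iii) of Definition~\ref{def:prism-theories} as above; (2) conversely any such pair yields a state of $A\oplus B$; (3) conclude the two sets are identified via the identity map on $V_A\oplus V_B$.
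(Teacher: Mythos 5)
Your proposal is correct and follows essentially the same route as the paper, which proves the lemma by unwinding items~(i)--(iii) of Definition~\ref{def:prism-theories} in the paragraph preceding the statement: item~(ii) forces $x=\lambda\omega_A$, $y=\mu\omega_B$, the kernel condition gives $\lambda=\mu$, and normalisation gives $\lambda=1$, with the converse inclusion checked directly. Your added remarks on the reverse inclusion and on the harmless non-uniqueness of the decomposition on the boundary of the cone are fine but not substantively different from the paper's argument.
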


\begin{rem}
If $A,B$ are two GPTs with dimensions $\dim A=d_A$ and $\dim B=d_B$, thanks to Lemma~\ref{lemma:Cartesian-product} we have that
\begin{equation}
    \dim\left(A\oplus B\right) = \left(d_A-1\right)\left(d_B-1\right)+1\, .
\end{equation}
\end{rem}

\section{On measurement normalisation} \label{app:restricting_search}

%\section{Sets of Distinguishing Effects Which Do Not Form Measurements}\label{sec:DistinguishingEffects}

Here we construct an example of a GPT with state space $\Omega$ and effect space $E$ in which one can find three states $(\omega_i)_{i=1,2,3}\subset \Omega$ and three extremal effects $(e_j)_{j=1,2,3}\subset E$ satisfying $e_j \cdot \omega_i = \delta_{ij}$, but such that $(\omega_i)_{i=1,2,3}$ are not perfectly distinguishable, i.e.\ there does not exist a \emph{measurement} $(f_k)_{k=1,2,3}$ such that $f_k\cdot \omega_i = \delta_{ik}$. The reason why this is possible, naturally, is that only collections of effects $(f_k)_k$ satisfying $\sum_k f_k = u$, with $u$ being the order unit, can represent physical measurements.

The state space of the GPT we have in mind is --- once again! --- shaped as a $3$-dimensional cube. More precisely, we consider the $n=3$ case of the GPT constructed in Section~\ref{subsec:hypercube} (see in particular~\eqref{cubic_cone} there). Its state space is depicted in Figure~\ref{cube_fig}. We identify there three states $\omega_1,\omega_2,\omega_3$, with coordinates
\bb
\omega_1 \coloneqq \left( 1,1,1,1 \right)^\intercal ,\quad \omega_2 \coloneqq \left( 1,-1,1,-1 \right)^\intercal ,\quad \omega_3 \coloneqq \left( 1,-1,-1,1 \right)^\intercal ,
\label{omega_123}
\ee
and five auxiliary states $\rho_0,\rho_1$ and $\sigma_1,\sigma_2,\sigma_3$, defined by
\bb
\rho_0 &\coloneqq \left( 1,-1,1,1 \right)^\intercal ,\qquad \rho_1 \coloneqq \left( 1,1,-1,-1 \right)^\intercal ,
\label{rho_01}
\ee
\bb
\sigma_1 \coloneqq \left( 1,-1,-1,-1 \right)^\intercal ,\quad  \sigma_2 \coloneqq \left( 1,1,-1,1 \right)^\intercal ,\quad  \sigma_3 \coloneqq \left( 1,1,1,-1 \right)^\intercal .
\label{sigma_123}
\ee
Note that the first coordinate represents the normalisation, in accordance with the notation of~\eqref{cubic_cone}, and the last three identify the position of the state in the $3$-dimensional `section' space depicted in Figure~\ref{cube_fig}.

We now construct the three extremal effects $(e_j)_{j=1,2,3}\subset E$ satisfying $e_j \cdot \omega_i = \delta_{ij}$. In the dual space set
\bb
e_1 \coloneqq \frac12 \left(1,1,0,0\right) ,\quad  e_2 \coloneqq \frac12 \left(1,0,0,-1\right) ,\quad  e_3 \coloneqq \frac12 \left(1,0,-1,0\right) .
\label{e_123}
\ee
(Note that the states were represented by column vectors, so the effects are represented by row vectors.) Note that indeed $e_j \cdot \omega_i = \delta_{ij}$. Moreover, since a generic effect is of the form $(c,y_1,y_2,y_3)$, with $\min\{c,1-c\}\geq \sum_i |y_i|$, it follows that each $e_j$ is an extremal effect. The faces of the state space on which $e_1=0$, $e_2=0$, and $e_3=0$ are depicted in Figure~\ref{cube_fig} as coloured in red, blue, and green, respectively.

We now show that the states in~\eqref{omega_123} are not perfectly distinguishable. A first clue that this may be the case can be obtained by noting that the three effects in~\eqref{e_123} satisfy $\sum_i e_i = \frac12 \left(3, 1,-1,-1\right) \not\leq \left(1,0,0,0\right)=u$, where $\not\leq$ signifies that the inequality $\leq$ can be violated if both sides are evaluated on certain states in $\Omega$. This means that the collection $(e_1,e_2,e_3)$ does not constitute a measurement. ITo turn this observation into a fully-fledged proof, one observes that the three effects in~\eqref{e_123} are the \emph{only} ones that can satisfy $e_j \cdot \omega_i = \delta_{ij}$: since they do not form a measurement, the states in~\eqref{omega_123} cannot be perfectly distinguishable.

\begin{figure} \centering
\begin{tikzpicture}
\coordinate (O) at (0,0,0);
\coordinate (A) at (0,\Width,0);
\coordinate (B) at (0,\Width,\Height);
\coordinate (C) at (0,0,\Height);
\coordinate (D) at (\Depth,0,0);
\coordinate (E) at (\Depth,\Width,0);
\coordinate (F) at (\Depth,\Width,\Height);
\coordinate (G) at (\Depth,0,\Height);

\draw[black,fill=gray!30] (O) -- (C) -- (G) -- (D) -- cycle;% Bottom Face
\draw[black,fill=red!30] (O) -- (A) -- (E) -- (D) -- cycle;% Back Face
\draw[black,fill=gray!30] (O) -- (A) -- (B) -- (C) -- cycle;% Left Face
\draw[black,fill=green!30,opacity=0.7] (D) -- (E) -- (F) -- (G) -- cycle;% Right Face
\draw[black,fill=gray!30,opacity=0.3] (C) -- (B) -- (F) -- (G) -- cycle;% Front Face
\draw[black,fill=blue!30,opacity=0.7] (A) -- (B) -- (F) -- (E) -- cycle;% Top Face

\draw[black] (A) node[above] {$\omega_3$} node {$\bullet$};
\draw[black] (D) node[right] {$\omega_2$} node {$\bullet$};
\draw[black] (F) node[anchor=south east] {$\omega_1$} node {$\bullet$};
\draw[black] (C) node[anchor=north east] {$\rho_1$} node {$\bullet$};
\draw[black] (E) node[anchor=south west] {$\rho_0$} node {$\bullet$};

\draw[black] (O) node[left] {$\sigma_1$} node {$\bullet$};
\draw[black] (B) node[left] {$\sigma_2$} node {$\bullet$};
\draw[black] (G) node[anchor=north west] {$\sigma_3$} node {$\bullet$};
\end{tikzpicture}
\caption{{A pictorial representation of the construction in~\ref{cube_ex}. The three coloured faces represent the set of states for which $e_1=0$ (red), $e_2=0$ (blue), and $e_3=0$ (green).}}
\label{cube_fig}
\end{figure}
\label{cube_ex}

We will however follow a different reasoning, which has the advantage of providing some quantitative insights. To this end, we will employ the auxiliary states in~\eqref{rho_01} and~\eqref{sigma_123}. We start by noticing that for all $k=1,2,3$ it holds that
$\rho_1 = 2(\sigma_k +\omega_k) - \sum_i \omega_i$.
Now, assume by contradiction that we have found a measurement $(f_k)_k$ satisfying both $\sum_k f_k = u$ and $f_k\cdot \omega_i = \delta_{ik}$. Then
\bb
1 = u(\rho_1) &= \sum_k f_k(\rho_1) = \sum_k f_k\left( 2(\sigma_k +\omega_k) - \sumno_i \omega_i \right) \\
&\geq \sum_k \left( 2 - \sum_i \delta_{ik} \right) = \sum_k (2-1) = 3\, ,
\ee
and we have reached a contradiction.

\end{document}